\newtheorem{theorem}{Theorem}
\newtheorem{lemma}{Lemma}
\newtheorem{definition}{Definition}
\newtheorem{remark}{Remark}
\renewcommand{\footnoterule}{%
  \kern -3pt
  \hrule width 150pt height .2pt
  \kern 2pt
}
\begin{document}

\title{On Lossy Joint Source-Channel Coding In Energy Harvesting Communication Systems}
\author{\normalsize Meysam~Shahrbaf Motlagh,\IEEEmembership{} Masoud~Badiei Khuzani,\IEEEmembership{} Patrick~Mitran~\IEEEmembership{}

\thanks{Meysam Shahrbaf Motlagh and Patrick Mitran are with the Department of Electrical and Computer Engineering, University of Waterloo, Waterloo, ON, Canada. Email: $\lbrace$mshahrba, pmitran$\rbrace$@uwaterloo.ca. \newline Masoud Badiei Khuzani is with the School of Engineering and Applied Science, Harvard University, MA, USA. Email: mbadieik@seas.harvard.edu. \newline
This paper was presented in part at the International Symposium on Information Theory (ISIT), Honolulu, Hawaii, USA, 2014 \cite{motlagh2014lossy}. \newline
This work was supported by the Natural Sciences and Engineering Research Council of Canada under Grant RGPIN 353447-2013.}}
\markboth{IEEE Transactions on Communications}{}
\maketitle

\begin{abstract}
We study the problem of lossy joint source-channel coding in a single-user energy harvesting communication system
with causal energy arrivals and the energy storage unit may have leakage. In particular, we investigate the achievable distortion in the transmission of a single source via an energy harvesting transmitter over a point-to-point channel. We consider an adaptive joint source-channel coding system, where the length of channel codewords 
varies based on the available battery charge. We first establish a lower bound on the achievable distortion. Then, as necessary conditions for local optimality, we obtain two coupled equations that determine the mismatch ratio between channel symbols and source symbols as well as the transmission power, both as functions
of the battery charge. As examples of continuous and discrete sources, we consider Gaussian and binary sources respectively. For the Gaussian case, we obtain a closed-form expression for the mismatch factor in terms of the $Lambert W$ function, and show that an increasing transmission power policy results in a decreasing mismatch factor policy and vice versa. Finally, we numerically compare the performance of the adaptive mismatch factor scheme to the case of a constant mismatch factor. 
\end{abstract}

\begin{IEEEkeywords}
Joint Source-Channel Coding, Energy Harvesting, Distortion, Resource Allocation
\end{IEEEkeywords}

\section{Introduction}
\IEEEPARstart{W}{ireless} sensor networks (WSN) provide a tool to gather and disseminate information and thus play an important role in unsupervised control systems. In many cases,
sensor devices are deployed in large numbers to cover a large area. As a result, regular maintenance and battery replacement for each individual sensor is impractical, if not impossible. Thus, to develop truly autonomous sensor networks that do not require regular maintenance, it is essential to supply a sustainable energy source to each node. 

Energy harvesting (EH), i.e., supplying energy by harnessing ambient energy resources such as solar, wind and thermal energy, is a promising state-of-the-art solution that can significantly improve sensor lifetime. In a typical EH node, the energy required for various sensor tasks is incrementally harvested from the environment and stored during the course of operation. However, due to the stochastic nature of renewable energy resources, sensor energy consumption must be managed adaptively to achieve good performance.
\subsection{Contributions}
In this paper, we focus on the design of data transmission policies in EH sensor devices. Specifically, we consider a scenario where a single node continuously senses data from a source and wishes to transmit this data over a point-to-point channel. The harvested energy is stored in a battery that may leak energy at a rate which depends on the available battery charge. The communication is carried by a joint source-channel coding (JSCC) scheme, where in each communication block a source sequence of fixed length is mapped into a channel codeword whose length depends only on the available battery charge. In other words, the mismatch factor between the channel symbols and source symbols is adaptive. 

We summarize the major contributions of this paper as follows:
\begin{itemize}
\item We formulate and model continuous-time lossy joint source-channel coding in an energy harvesting communication system, where the transmission power and bandwidth mismatch factor, i.e., the length of channel codewords per source symbol, are dynamically adapted to the available battery charge.
\item We establish a lower bound on the average distortion and show that in the case of infinite battery capacity and no leakage it is asymptotically achieved with a constant transmission power and a constant mismatch factor. 
\item Using a calculus of variations technique, we find achievable locally optimal transmission power and mismatch factor policies that minimize the average distortion at the receiver.
\item We develop an interesting structural result on the instantaneous distortion. Namely, as long as the battery is not depleted, locally optimal transmission power and mismatch factor policies will adaptively adjust with the battery charge in such a way that the instantaneous distortion is maintained to a constant level. 
\item For a moderate-size battery, we numerically show that our proposed scheme with an adaptively varying mismatch factor achieves a smaller average distortion compared to a scheme with a constant mismatch factor.
\item For Gaussian and binary sources, we numerically find locally optimal power policies and mismatch factor policies, both as functions of battery charge. With different leakage rates, i.e., zero leakage rate as well as arbitrary non-zero leakage rates, we observe that a good transmission power policy and mismatch factor policy increases and decreases respectively, as the battery charge increases.
\end{itemize}
\subsection{Related Work}\label{sec:2}
Among prior works that consider lossy communication with EH transmitters/receivers are \cite{zhao2013optimal,castiglione2012energy,castiglione2014energy,tapparello2012dynamic,orhan2014source, limmanee2013distortion,nayyar2013optimal}. In \cite{zhao2013optimal}, the mean squared distortion of the estimated source symbols at the receiver is minimized, where both online and offline EH scenarios are considered and the mismatch factor is always one. In \cite{castiglione2012energy}, the problem of energy allocation for data acquisition and transmission in WSNs is studied, and the case of a single sensor as well as the case of multiple sensors are both considered. A similar problem is studied in \cite{castiglione2014energy} for the case of a single source with finite battery and data buffer. Another interesting work is \cite{tapparello2012dynamic}, where a perturbation-based Lyapunov technique is used to obtain an online energy management scheme for source-channel coding of correlated sources. In \cite{orhan2014source}, a Gaussian source is transmitted over a flat fading channel and the offline minimization of the total distortion over a finite-time horizon is considered. Therein, the optimal distortion and transmission power are found subject to a delay constraint for reconstruction of the source symbols at the receiver. In \cite{limmanee2013distortion}, the problem of uncoded transmission over a fading channel is investigated, and an optimal energy allocation scheme to minimize the total distortion is established.   

Communication systems with EH transmitters/receivers have also been studied extensively in the context of lossless transmission. For instance, the offline minimization of the transmission completion time for a single source is considered in \cite{Yang2012a}, where the battery capacity is infinite. This problem has been also extended to the cases of finite battery \cite{tutuncuoglu2012optimum}, multiple access channels \cite{yang2012optimal}, broadcast channels \cite{yang2012broadcasting,ozel2012optimal,antepli2011optimal} and fading channels \cite{ozel2011transmission}. Also in \cite{DelayOptimal}, the problem of the trade-off between the average queueing delay of packets and power consumption of a  power grid has been studied using a 2D Markov chain model.

In another line of work, throughput/sum-throughput maximization is considered for point-to-point channels \cite{mitran2012optimal}, interference channels \cite{tutuncuoglu2012sum} and relay channels \cite{huang2013throughput,gunduz2011two}. A more realistic battery model with energy leakage is also considered in this context \cite{devillers2012general,tutuncuoglu2012communicating, tutuncuogluoptimum}. Specifically in \cite{tutuncuogluoptimum}, offline maximization of throughput in a single-user channel as well as a broadcast channel has been studied under battery imperfection constraints. In particular, the model assumes that only a fraction of harvested energy is stored in the battery due to charging/discharging inefficiency, whereas a completely different type of battery imperfection, i.e., battery leakage over time as in \cite{devillers2012general}, is considered in our paper.

Another similar work in the framework of lossless communication is \cite{khuzani2013online}, where authors have studied online maximization of long-term average sum-throughput in a multiple access channel using the same calculus of variations based technique. Despite similarity in the approach taken in both works, the nature of the problems are quite different. In the current work, we are concerned with lossy source-channel transmission in EH communication systems. The lossy nature of the problem opens a whole new set of questions that were not addressed in \cite{khuzani2013online}, e.g., how can the trade-off between increasing and decreasing transmission power be managed so that an optimal average distortion is achieved at the receiver? Does an adaptive source-channel coding scheme significantly improve the distortion? If so, how exactly should the length of channel codes per source symbols be changed in terms of the battery state? Our focus here is to minimize long-term average distortion in a point-to-point channel that exploits an adaptive joint source-channel coding scheme. Different from \cite{khuzani2013online}, we consider a more general battery model with a non-constant leakage rate and have numerically investigated the impact of different battery leakages on the compression-transmission performance. Moreover, in addition to the transmission power, the bandwidth mismatch factor between source symbols and channel codewords is also adapted to the battery charge.

The rest of this paper is organized as follows. In Section \ref{sec:3}, we study the communication model as well as the EH model. In Section \ref{sec:4}, we present the problem formulation. A lower bound on the average distortion as well as the achievable scheme of the communication resources are studied in Section \ref{sec:5}. Some structural results for the case of Gaussian source are established in Section \ref{sec:6}. In Section \ref{sec:8}, we provide numerical results and simulations. Finally, Section \ref{sec:9} concludes this paper.
\section{Preliminaries}\label{sec:3}
\subsection{Communication Model}
We consider the lossy source-channel transmission of a stationary memoryless source with general alphabet (continuous or discrete) over a point-to-point channel. We first assume that the communication is carried over $K$ consecutive blocks of joint source-channel coding (JSCC), where the number of source symbols in each block is fixed, whereas the length of the channel codewords varies from one block to another based on the available charge in the battery. Specifically, during the $i^{\rm th}$ block, $i=1, ..., K$, a sequence of $m$ independent and identically distributed (i.i.d.) realizations of the source $s_i^m = (s_i[1], s_i[2], ..., s_i[m])$, are mapped to a channel codeword of length $n_i$, i.e., $x_i^{n_i} = (x_i[1], x_i[2], ..., x_i[n_i])$. The input $x_i^{n_i}$ induces a distribution on the channel output $y_i^{n_i} = (y_i[1], y_i[2], ..., y_i[n_i])$, according to the law $\mathbb{P}_Y \left( y_i^{n_i} \right)= \prod_{k=1}^{n_i} \mathbb{P}_{Y \vert X}( y_i[k] \big\vert x_i[k]) \mathbb{P}_X \left( x_i[k] \right)$, where $\mathbb{P}_{Y \vert X}\left( y \vert x \right)$ is the conditional distribution of the stationary and memoryless channel. At the receiver, after each observation $y_i^{n_i}$, $i=1, ..., K$, an estimate $\hat{s}_i^m(y_i^{n_i}) = (\hat{s}_i[1], \hat{s}_i[2], ..., \hat{s}_i[m])$ of $s_i^m$ is made.

We assume that $\tau$ is the time duration needed for one symbol to be generated/transmitted. Associated with $\tau$, $\Delta t_s=m \times \tau$ is the time duration to generate $m$ source symbols, and $\Delta t_{c_i}=n_i \times \tau$ is the time duration to transmit a channel codeword of length $n_i$ during the $i^{\rm th}$ block (see Fig. \ref{fig:1}). 
\begin{definition}
We define the bandwidth mismatch factor between the $i^{\rm th}$ channel codeword block, with duration $\Delta t_{c_i}$, and the source symbol block, with duration $\Delta t_s$, as $‎\kappa‎(t_{c_i}):=\Delta t_{c_i}/\Delta t_s=n_i/m$ for $i=1, ..., K$, where $t_{c_i}:=\sum_{j=1}^i \Delta t_{c_j}$ is the channel output time (at the end of the $i^{\rm th}$ block) that we take as the reference time throughout the paper. Likewise, $\lbrace p(t_{c_i}) \rbrace_{i=1}^K$ are the average power constraints on the codewords, i.e.,
\begin{equation}\label{def}
\frac{1}{n_i} \sum_{j=1}^{n_i} \big\vert x_i[j] \big\vert^2 \leq p(t_{c_i}), \quad  i=1, ..., K.
\end{equation} 
\end{definition}
As demonstrated in Fig. \ref{fig:1}, the mismatch factor $\kappa(t_{c_i})$ is fixed throughout each block, nevertheless it can change adaptively from one block to another. Similarly, the transmission power $p(t_{c_i})$ can change from block to block. 
\begin{figure*}[t!]
\centering 
\def\svgwidth{300pt} 
\hspace{1cm}\input{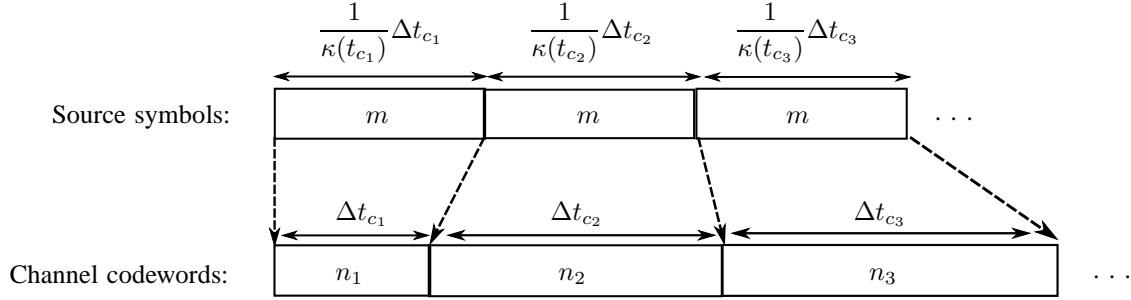} 
\caption{\small Consecutive blocks of JSCC, where $m$ and $n_i$, $i=1, ..., K$, are the number of source symbols and length of channel codewords in each block, respectively.} 
\label{fig:1}
\end{figure*} 
The rate-distortion function (i.e., lossy source coding rate) in the $i^{\rm th}$ block, for a source $S$ is given by
\begin{equation}\label{eq:3}
R_s(D_i)= \inf_{} \; I(\hat{S};S), \quad i=1, ..., K,
\end{equation}
where the infimum is taken over all conditional distributions $\mathbb{P}_{\hat{S}\vert S}$ such that $\mathbb{E}[d(\hat{S},S)] \leq D_i$ in the $i^{\rm th}$ block, and $I(\hat{S};S)$ is the mutual information between the estimated symbols $\hat{S}$ and the source symbols $S$. We assume that for a given source $S$ and distortion measure $d(\hat{s},s)$ the rate-distortion function $R_s(D)$ has two thresholds $D_{\rm max}$ and $R_s^{\rm th}$ (where $D_{\rm max}$ is always finite whereas $R_s^{\rm th}$ could be finite or infinite), such that:
\begin{description}
\item[{[S1]}]  $R_s(D)$ is zero for $D \geq D_{\rm max}$,
\item[{[S2]}] $R_s(D)$ is strictly decreasing, convex and twice continuously differentiable for $0 < D < D_{\rm max}$, i.e., $R_s^{\prime}(D) < 0$ and $R_s^{\prime\prime}(D) > 0$, and continuous at $D=D_{\rm max}$,
\item[{[S3]}] $\lim_{D \downarrow 0}R_s^{\prime}(D)=-\infty$ , and $R_s^{\prime}(D)$ is finite everywhere else,
\item[{[S4]}] $\lim_{D \downarrow 0}R_s(D)=R_s^{\rm th}$, and $R_s(D)$ is right-continuous at $D=0$ if $R_s^{\rm th}$ is finite.
\end{description}
These conditions abstract the form of a rate-distortion function. For example, the Gaussian source $\mathcal{N}(0,\sigma^2)$, with squared-error distortion measure \textbf{$d(\hat{s},s)=\vert s - \hat{s} \vert^2 $} has the rate-distortion function given by \cite{cover2012elements},
\begin{equation}\label{eq:3:1}
 R_s(D) = \left\{ 
  \begin{array}{l l}
    \dfrac{1}{2} \log \dfrac{\sigma^2}{D}\; & \quad 0 < D < \sigma^2 \\
    0\; & \quad D \geq \sigma^2,
  \end{array} \right.
\end{equation}
and satisfies [S1]-[S4] with $D_{\rm max}=\sigma^2$ and $R_s^{\rm th}=\infty$. Another example is a Bernoulli$(\mathsf{p})$ binary source with Hamming distortion $d(\hat{s},s)=0$ for $s=\hat{s}$ and $d(\hat{s},s)=1$ otherwise. The rate-distortion function is then given by \cite{cover2012elements},
\begin{equation}\label{eq:3:2}
R_s(D) = \left\{ 
  \begin{array}{l l}
    H(\mathsf{p})-H(D)\; & \quad 0 \leq D < \min \lbrace \mathsf{p}, 1-\mathsf{p} \rbrace \\
    0\; & \quad D \geq \min \lbrace \mathsf{p}, 1-\mathsf{p} \rbrace,
  \end{array} \right.
\end{equation}
where $H(\mathsf{p})$ is the entropy of the source and
\begin{equation}\label{eq:3:3}
H(D) := -D \log_2 D - (1-D) \log_2 (1-D).
\end{equation}
For the binary source, $D_{\rm max}=\min \lbrace \mathsf{p}, 1-\mathsf{p} \rbrace$ and $R_s^{\rm th}=H(\mathsf{p})$. 

In the $i^{\rm th}$ block, the transmit power is limited to $p(t_{c_i})$, and thus the channel coding rate $R_c(p(t_{c_i}))$ is given by
\begin{equation}
R_c(p(t_{c_i})) = \sup I(X;Y), \quad i=1, ..., K,
\end{equation}
where the supremum is taken over all channel input distributions $\mathbb{P}_X$ that satisfy the power constraint $p(t_{c_i})$ in \eqref{def} for the $i^{\rm th}$ block. This model allows for fast fading, where the block length is sufficiently large such that the effect of fading can be averaged. We assume that the channel coding rate $R_c(p)$ has the following properties:
\begin{description}
\item[{[C1]}] $R_c(p)$ is strictly positive for $p>0$, zero and right-continuous at $p=0$,
\item[{[C2]}] $R_c(p)$ is strictly increasing, concave and twice continuously differentiable  for $p>0$, i.e., $R_c^{\prime}(p) > 0$ and $R_c^{\prime\prime}(p) < 0$.
\end{description}
Similarly, these conditions abstract the form of a channel coding rate function, and an example is the Shannon rate function $R_c(p)=\dfrac{1}{2}\log_2(1+p/N)$, which will be assumed later in the paper.

We assume that the block lengths $m$ and $n_i, \; i=1, ..., K$, are sufficiently large that the rate-distortion function and the channel coding rate have operational significance. Since the mismatch factor is fixed throughout each JSCC block, source-channel separation holds in each block. Therefore, based on a separate source and channel coding scheme, the relation between source coding rate $R_s(D_i)$ and the channel coding rate $R_c(p(t_{c_i}))$ is given by 
\begin{equation}\label{eq:4}
R_s(D_i)= \kappa(t_{c_i}) R_c(p(t_{c_i})), \quad i=1, ..., K.
\end{equation} 
One should note that by the condition [S2] and continuity of $R_s(D_i)$ at $D_i=D_{\rm max}$, the inverse function $D_i(R_s)$ of the rate-distortion always exists for $0 \leq \kappa(t_{c_i}) R_c(p(t_{c_i})) < R_s^{\rm th}$ (though in most cases there is no closed-form), and we can compute the distortion $D_i(R_s)$ in the $i^{\rm th}$ block in terms of the transmission power $p(t_{c_i})$ and the mismatch factor $\kappa(t_{c_i})$ using \eqref{eq:4}, i.e., $D_i(R_s) := D(p(t_{c_i}),\kappa(t_{c_i}))$. Moreover, if $R_s^{\rm th}$ is finite then the distortion is $D(p(t_{c_i}),\kappa(t_{c_i}))=0$ for pairs $p(t_{c_i}),\kappa(t_{c_i})$ such that $\kappa(t_{c_i}) R_c(p(t_{c_i})) \geq R_s^{\rm th}$.

From the definition of the mismatch factor and Fig. \ref{fig:1}, we directly obtain
\begin{equation}\label{eq:1}
K\times \Delta t_s = \sum_{i=1}^K \frac{1}{\kappa(t_{c_i})} \Delta t_{c_i}.
\end{equation}
Similarly, the total average distortion per source symbol over $K$ blocks can be written as
\begin{align}\label{eq:5-1}
D_{\rm avg}:=& \hspace{2mm} \frac{1}{K\Delta t_s} \sum_{i=1}^K D(p( t_{c_i}),\kappa( t_{c_i})) \Delta t_s \\
\label{eq:5}=& \hspace{2mm} \frac{1}{K\Delta t_s} \sum_{i=1}^K D(p( t_{c_i}),\kappa( t_{c_i})) \frac{1}{\kappa( t_{c_i})} \Delta t_{c_i}.
\end{align}   

\subsection{Continuous-Time Model}
As a practical assumption, we suppose that both the source and channel block times, $\Delta t_s$ and $\Delta t_{c_i}$ respectively, are small compared to the battery dynamics. Hence, we develop a continuous-time model in the asymptotic regime based on \eqref{eq:1} and \eqref{eq:5}, and let $K$ grow so that
\begin{align}
\nonumber T_s :=& \; K \Delta t_s  \\
\nonumber  T_c :=&  \; \sum_{i=1}^K \Delta t_{c_i},
\end{align}
where the terms $T_s$ and $T_c$ have fixed values as $\Delta t_s \to 0$, $\Delta t_{c_i} \to 0$ and $K \to \infty$. Eq. \eqref{eq:1} has the form of a Riemann sum in the limit of $\Delta t_s \to 0$, $\Delta t_{c_i} \to 0$ and $K \to \infty$,  and thus the continuous-time limit takes the following form
\begin{align}\label{eq:6}
T_s = \int_0^{T_c} \frac{1}{\kappa(t_c)} dt_c.
\end{align}
We define 
\begin{align}
\label{eq:6:1}\rho(T_c) := \; \frac{T_s}{T_c} = \frac{1}{T_c} \int_0^{T_c} \frac{1}{\kappa(t_c)} dt_c.
\end{align}
In this paper, we are interested in infinite-time horizon communication and long-term average distortion. We thus let the transmission time become asymptotically large, i.e., $T_c \to \infty$. If $\lim_{T_c \to \infty} \rho(T_c) > 1$ or $\lim_{T_c \to \infty} \rho(T_c) < 1$, the backlog in either the source symbols queue or the transmission of channel codewords tends to infinity. Thus, for a stable source-channel communication system we assume $\lim_{T_c \to \infty} \rho(T_c) = \lim_{T_c \to \infty} T_s/T_c = 1$. Therefore, from \eqref{eq:6:1} for an asymptotically stable joint source-channel communication system we require that,
\begin{align}\label{eq:7}
 \lim_{T_c \to \infty} \frac{1}{T_c} \int_0^{T_c} \frac{1}{\kappa(t_c)} dt_c = 1.
\end{align} 
Hence, we rewrite the continuous-time limit of \eqref{eq:5} as 
\begin{align}
\nonumber D_{\rm avg} &= \lim_{T_c \to \infty} \frac{1}{T_s} \int_0^{T_c} D(p(t_c),\kappa(t_c)) \frac{1}{\kappa(t_c)} dt_c \\
\label{eq:8} &= \lim_{T_c \to \infty} \frac{1}{T_c} \int_0^{T_c} D(p(t_c),\kappa(t_c)) \frac{1}{\kappa(t_c)} dt_c,
\end{align}
where \eqref{eq:8} follows from the fact that $\lim_{T_c \to \infty} T_s/T_c = 1$.
\subsection{Storage Model}
We assume that the energy arrival times and their amounts are not known at the transmitter a priori. Therefore, the instantaneous energy of the battery is a stochastic process that can be characterized in terms of the energy arrival process and battery depletion rate. Let $\{E_i\}_{i=1}^{\infty}$, denote the size of the energy packets arriving to the battery at time instants $\{T_i\}_{i=1}^{\infty}$, where $E_i>0$ and $T_1 < T_2 < ...$ . We assume that energy packets are i.i.d. with the tail distribution function denoted by $B(z)=\mathbb{P} [E > z]$, and the corresponding arrival times are a homogeneous Poisson Point Process (PPP). Specifically, the inter-arrival times are i.i.d. and exponentially distributed with parameter $\delta$, i.e., $\Delta T_n := T_{n+1} - T_n \sim {\rm Exp}(\delta) $. One should note that the PPP assumption can subsume cases of EH systems that are modeled as bursty packet arrivals at discrete times such as regenerative shock absorbers that use the piezoelectric effect to transform random mechanical shocks into electrical energy. Due to the unpredictability of their occurrence, a PPP is a good model to describe the behavior of such EH systems. 

The total harvested energy at the transmitter up to the time $t$, $\{A(t): t \geq 0 \}$ is thus a compound Poisson process given by
\begin{equation}\label{eq:9}
A(t) :=\sum_{n\in \mathbb{N}} E_n \mathds{1}_{\lbrace T_n <t \rbrace} .
\end{equation} 
We assume that the capacity of the battery $L$ is finite, that is, $L < \infty$. Furthermore, we assume that the battery is imperfect in the sense that when it is charged, it leaks energy over time at a rate which depends only on the current battery charge $Z(t)$, and is denoted by $\ell(t)$ at time $t$. Also, it is clear that there is no leakage when the battery is depleted. 

The instantaneous battery charge at time $t$, $\{ Z(t): t \geq 0 \}$ is therefore a stochastic process described by
\begin{equation}\label{eq:10}
Z(t)= z_0 + A(t) - \int_{0^+}^t \Big( p \left(s\right)+ \ell (s) \Big) ds- R(t),
\end{equation}    
where $z_0 = Z(t)\vert_{t=0}$ is the initial battery charge and $R(t)$ is a non-decreasing, non-negative and continuous-time \textit{reflection} process with $R(t)\vert_{t=0} = 0$, that only increases over the set $\{t:Z(t)=L\}$ \cite{piera2008boundary}. The reflection process accounts for the excess energy arrivals that overflow the battery capacity and ensures that even for large energy packet arrivals the storage process is bounded by its capacity limit at all times, i.e., $Z(t) \in [0 , L]$. Furthermore, $p(t)+ \ell (t)$ is the instantaneous battery depletion rate. Since the transmission power $p(t)$ is adapted to the battery charge $Z(t)$, this depends on time only through $Z(t)$, i.e., $p(t) + \ell (t)=p \left( Z(t) \right)+ \ell \left( Z(t) \right)$. Likewise, the instantaneous mismatch factor $\kappa(t)$ is modulated by the battery charge $Z(t)$, where  $\kappa(t)= \kappa(Z(t))$. More specifically, since $p(t)$, $\ell(t)$ and $\kappa(t)$ all depend on $t$ only through the battery charge $Z(t)$, with slight abuse of notation we denote by $p(z)$, $\ell(z)$ and $\kappa(z)$ the explicit dependence of these on $z$. In the rest of this paper, we refer to $p(z)$ and $\kappa(z)$ as the power policy and mismatch factor policy. The storage process $Z(t)$ can then be viewed as a continuous-time Markov process, where the state space of the process is the finite interval $[0,L]$. We impose the following conditions on the feasible set of power policies and leakage rates
\begin{itemize}
\item $\forall z \in (0,L] ,\; p(z) > 0 $, and $p(z)\big|_{z=0}=0$,
\item $\sup_{0 < z \leq L}  p(z) < \infty $, 
\item $\sup_{0 < z \leq L}  \ell(z) < \infty $, and $\ell(z)\big|_{z=0}=0$.
\end{itemize}   
The first condition is to avoid a reserve of energy in the battery that can never be consumed by transmission and thus effectively reduces the usable energy stored in the battery. The second condition reflects the fact that instant depletion of an amount of energy $\Delta E>0$ is not allowed. Thus, an optimal power policy $p(z)$ must satisfy both of these constraints. With these conditions on $p(z)$ and $\ell(z)$, $Z(t)$ becomes irreducible in the sense that there is only one single communicating class in the state space. 

We define $\tilde{\pi}_{T_c}(z)$ as the empirical distribution function of the storage process with respect to the reference time, i.e.,
\begin{equation}\label{eq:14}
\tilde{\pi}_{T_c}(z):= \frac{1}{T_c}\int_0^{T_c} \mathds{1}_{\lbrace Z(t_c) \leq z\rbrace} dt_c.
\end{equation}
By the strong law of large numbers, as $T_c \to \infty$, $\tilde{\pi}_{T_c}(z)$ converges to the stationary probability measure of the storage process denoted by $\pi(z)$, almost surely for every value of $z$.

The following theorem identifies stationary distribution of the storage process $Z(t)$ \cite{asmussen2003applied}, where we recall that the inter-arrival times are i.i.d. and exponentially distributed with parameter $\delta$, i.e., $T_{n+1} - T_n \sim {\rm Exp}(\delta) $.
\begin{theorem}
For $L<\infty$, the storage process $Z(t)$ is positive recurrent and there exists a unique stationary probability measure $\pi(z)=\mathbb{P}[Z(t) \leq z]$, which may have an atom $\pi_0 := \pi(z)\vert_{z=0} \geq 0$, and is absolutely continuous on $(0,L]$ such that
\begin{equation}\label{eq:12}
\pi(z) =  \pi_0 + \int_{0^+}^z f(u) du ,
\end{equation}
where $f(z)$ is the absolutely continuous part of the probability measure. Moreover,
\begin{equation}\label{eq:13}
f(z) \left( p(z)+ \ell (z) \right) = \delta \pi_0 B(z) + \delta \int_{0^+}^z B(z-u) f(u) du.
\end{equation}
\end{theorem}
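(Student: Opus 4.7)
The plan is to recognize $Z(t)$ as a piecewise-deterministic Markov process on the compact interval $[0,L]$: between Poisson-rate-$\delta$ upward jumps (of i.i.d. size with tail $B$), the state moves deterministically downward at velocity $p(z)+\ell(z)$, with a reflection at $L$ and a sticky boundary at $0$. Under the standing conditions on $p$ and $\ell$ (positivity on $(0,L]$, boundedness, and vanishing at $0$), the process is irreducible on $[0,L]$ — from any starting point one can reach any open set by waiting long enough for the drift to drain the battery or for an exponential inter-arrival to produce a suitably sized jump. Compactness of the state space together with irreducibility then yields positive recurrence, and classical PDMP/storage-process results (as in Asmussen, Ch.\ XIV) give existence and uniqueness of an invariant probability measure $\pi$.

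Next I would decompose $\pi$. Because the deterministic flow continuously drains the battery whenever $Z(t)>0$ while the battery remains empty until the next exponential arrival, the process spends a positive (possibly zero) fraction of time at $0$, which produces the atom $\pi_0=\pi(\{0\})\ge 0$. No atom can form at an interior point $z_\ast\in(0,L)$ since the flow passes through $z_\ast$ in zero time; similarly, reflection at $L$ takes place only at isolated jump epochs and so contributes no mass. Consequently $\pi$ restricted to $(0,L]$ is absolutely continuous with some density $f$, yielding the representation $\pi(z)=\pi_0+\int_{0^+}^{z}f(u)\,du$.

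To obtain \eqref{eq:13} I would invoke a level-crossing (balance) argument at each $z\in(0,L)$. Downcrossings occur only through the deterministic drift: the process traverses a thin slab around $z$ at speed $p(z)+\ell(z)$ with time-average occupation density $f(z)$, so the downcrossing rate through $z$ equals $f(z)\bigl(p(z)+\ell(z)\bigr)$. Upcrossings occur only at jump times, which form a Poisson process of rate $\delta$ whose pre-jump state has law $\pi$ by PASTA; the jump from pre-state $u<z$ crosses level $z$ iff the jump size exceeds $z-u$, an event of probability $B(z-u)$. Hence the upcrossing rate is
\begin{equation*}
\delta\int_{[0,z)}B(z-u)\,\pi(du)=\delta\pi_0 B(z)+\delta\int_{0^+}^{z}B(z-u)f(u)\,du.
\end{equation*}
Equating downcrossing and upcrossing rates gives \eqref{eq:13}.

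The main obstacle I anticipate is the first step — rigorously setting up positive recurrence and absolute continuity of $\pi$ on $(0,L]$ when $p(z)$ is allowed to vanish at $z=0$ (so the downward drift lacks a uniform lower bound) and when the reflection at $L$ must be shown not to leave a singular component. Once these regularity facts are in hand, the PASTA-based level-crossing identity is a standard ergodic computation, and a direct appeal to \cite{asmussen2003applied} closes out the argument.
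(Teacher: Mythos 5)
Your proposal matches the paper's treatment: the paper does not prove this theorem from scratch but appeals to the storage-process results in \cite{asmussen2003applied} and, in Remark 2, identifies \eqref{eq:13} as exactly the down-crossing/up-crossing balance at level $z$ that you derive via PASTA. Your level-crossing computation and the decomposition into an atom at $0$ plus an absolutely continuous part on $(0,L]$ are the intended argument, and your flagged concern about regularity when $p(z)\downarrow 0$ is precisely what the citation to Asmussen is meant to cover.
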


\begin{remark}
From \eqref{eq:14}, $\pi_0$ is the fraction of time that the battery remains discharged. 
\end{remark}
\begin{remark}
Equation \eqref{eq:13} is the equilibrium condition between the rate of down-crossing $f(z) ( p(z) + \ell(z) )$ and the rate of up-crossing $\delta \pi_0 B(z) + \delta \int_{0^+}^z B(z-u) f(u) du$ at level $z$.
\end{remark}
We now assume that the packets of energy are exponentially distributed with parameter $\lambda$, i.e., $B(z) = \exp(-\lambda z)$. Therefore, \eqref{eq:13} reduces to
\begin{equation}\label{eq:16}
f(z) \left( p(z)+ \ell (z) \right) = \delta e^{-\lambda z} \left( \pi_0 + \int_{0^+}^z e^{\lambda u} f(u) du \right).
\end{equation}

\section{Problem Formulation}\label{sec:4}
Since, $\pi(z)$ is a probability measure we have
\begin{align}\label{eq:17}
\int_0^L \pi(dz) &= \pi_0 + \int_{0^+}^L f(z) dz = 1.
\end{align}
By combining \eqref{eq:7} and \eqref{eq:12}, we also obtain the following constraint on the mismatch factor
\begin{align}
 ‎\lim_{T_c \to \infty} ‎\frac{1}{T_c}‎ ‎\int_0^{T_c} ‎\frac{1}{\kappa(Z(t_c))} dt_c  &=‎ \mathbb{E}_{\pi}\left[ \frac{1}{\kappa(z)} \right] \quad \quad a.s. \\
 &= \frac{\pi_0}{\kappa(0)} + \int_{0^+}^L \frac{f(z)}{\kappa(z)}  dz \\
 &=1, \label{eq:20} 
\end{align}
where $\kappa(0):=\kappa(z)\vert_{z=0}$ is the number of channel uses per source symbol when the battery is exhausted. Similarly, we apply the ergodicity argument to the distortion function in \eqref{eq:8}. We first rewrite $D_{\rm avg}$ from \eqref{eq:8} as
\begin{equation}\label{eq:21}
D_{\rm avg}=\lim_{T_c \to \infty} \frac{1}{T_c} \int_0^{T_c} \hspace{-3mm}D \Big( p \left( Z \left( t_c \right) \right),\kappa \left( Z \left( t_c \right) \right)  \Big) \frac{1}{\kappa \left( Z \left( t_c \right) \right) } dt_c.
\end{equation} 
Due to the ergodicity of the storage process $Z(t_c)$, we obtain
\begin{align}\label{eq:22}
D_{\rm avg} =& \; \int_0^L D\left(p(z),\kappa(z)\right)\frac{1}{\kappa(z)} \pi(dz) \quad \quad a.s. \\
\label{eq:23} =& \; \pi_0 D^{\dagger}(p(0),\kappa(0)) + \int_{0^+}^L D^{\dagger}\left(p(z),\kappa(z)\right) f(z) dz\\ 
=& \; \mathbb{E}_{\pi}\left[D^{\dagger}\left(p(Z),\frac{1}{\kappa(Z)}\right)\right],
\end{align}
where
\begin{equation}\label{eq:24}
D^{\dagger} \left( p, \frac{1}{\kappa} \right) := D (p,\kappa) \frac{1}{\kappa}.
\end{equation}
We now wish to find an optimal transmission power policy $p(z)$ and mismatch factor policy $\kappa(z)$ such that the distortion at the receiver is minimized. More specifically, we want to minimize \eqref{eq:23} subject to the constraints in \eqref{eq:16}, \eqref{eq:17} and \eqref{eq:20}, i.e.,
\begin{align}\label{eq:25}
&\hspace{-8mm}\inf_{f(z),\pi_0,\kappa(z),\kappa(0)}  \hspace{-1mm}   \pi_0 D^{\dagger}(p(0),\kappa(0)) + \int_{0^+}^L D^{\dagger}\left(p(z),\kappa(z)\right) f(z) dz\\
\label{eq:25:1}
\mbox{s.t.} \hspace{1mm}& f(z)\left( p(z) + \ell(z) \right) =\delta e^{-\lambda z}\Big(\pi_0 + \int_{0^+}^L e^{\lambda u}f(u)du\Big) \\ 
\label{eq:25:2}
&\pi_0+\int_{0^+}^L f(z) dz =1\\
\label{eq:25:3}
& \frac{\pi_0}{\kappa(0)} + \int_{0^+}^L \frac{f(z)}{\kappa(z)} dz =1\\
\label{eq:25:4}
&f(z)\geq0,\quad \pi_0\geq0,\quad \kappa(z)>0,\quad  \kappa(0) > 0,
\end{align}
where the strict inequalities $\kappa(z)>0$ and $\kappa(0)>0$ are implied by \eqref{eq:25:3}. We note that for a given $f(z)$, $p(z)$ can be calculated directly using \eqref{eq:25:1}. 

We note that in the formulation of the optimization problem \eqref{eq:25}, the mismatch factor and the transmission
power can be viewed as two limited resources that must be allocated efficiently. Specifically, due to \eqref{eq:25:3}, the inverse mismatch factor has an average of unity. Consequently, using
a large mismatch factor, which is desirable as this reduces the instantaneous distortion, must be balanced with using a small mismatch factor to maintain the average constraint. Similarly, the transmitter cannot consume more energy than what is stored in the battery, which itself is replenished at a maximum rate of $\delta/\lambda$ (ignoring the energy that is lost due to overflow). Thus, periods with a large transmission power must be accompanied by periods with a small transmission power.

\section{Main Results}\label{sec:5}
\subsection{Distortion Lower Bound}
In this part, we derive a distortion lower bound using Jensen's inequality and convexity of the distortion function.
\subsubsection{Finite Battery Capacity}
We first note that for any power policy $p(z)$, when the battery capacity is finite, the following upper bound has been computed in \cite{khuzani2013online}:
\begin{align}
\mathbb{E}_{\pi} \left[p(Z)\right] 
\leq \frac{\delta}{\lambda} \left( 1 - e^{-\lambda L} \right).
\label{eq:26:1}
\end{align}
One should note that even with a non-zero leakage rate, the upper bound in \eqref{eq:26:1} still holds, however it is a potentially looser bound. 

We now find a lower bound on the objective function $\mathbb{E}_{\pi} [ D^{\dagger} (p(Z), 1/\kappa(Z))]$, based on the upper bound in \eqref{eq:26:1} and the following convexity lemma.
\begin{lemma}\label{lemma:1}
The function $D^{\dagger} (p,q)$ defined in \eqref{eq:24} is jointly convex over the pair $p$ and $q$, where $q:=1/\kappa$.
\label{th:1}
\end{lemma}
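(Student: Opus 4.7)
The plan is to write $D^{\dagger}(p,q)$ as the composition of a perspective function with a concave function, and then invoke the standard convex-analysis rule for composing a monotone convex outer function with a concave inner function.

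First, I would invert the rate-distortion function. By hypothesis [S2], $R_s(\cdot)$ is strictly decreasing, convex, and continuously differentiable on $(0, D_{\rm max})$, so its inverse $\psi := R_s^{-1}$ is well-defined and, by the elementary fact that the inverse of a convex decreasing function is itself convex and decreasing, $\psi$ is convex and strictly decreasing on its range. If $R_s^{\rm th}$ is finite, I would extend $\psi$ by setting $\psi(r) = 0$ for all $r \geq R_s^{\rm th}$, which preserves both convexity and the non-increasing property on $[0,\infty)$. Using \eqref{eq:4} with $\kappa = 1/q$, the distortion rewrites as $D(p,\kappa) = \psi\!\left(R_c(p)/q\right)$, and therefore
\begin{equation*}
D^{\dagger}(p,q) \;=\; q\,\psi\!\left(\frac{R_c(p)}{q}\right).
\end{equation*}

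Second, I would recognize the map $g(u,q) := q\,\psi(u/q)$ as the perspective of $\psi$. Since $\psi$ is convex, a classical result of convex analysis states that its perspective is jointly convex on the half-plane $\{(u,q) : q>0\}$. Moreover, differentiating under the minimum formally, $\partial_u g(u,q) = \psi'(u/q) \leq 0$ because $\psi$ is non-increasing; hence $g(\cdot, q)$ is non-increasing in $u$ for each fixed $q>0$. By [C2], $R_c(p)$ is concave in $p$. Writing $D^{\dagger}(p,q) = g(R_c(p), q)$, the composition rule then yields joint convexity: concretely, for any $\lambda \in [0,1]$, concavity of $R_c$ gives $R_c(\lambda p_1 + (1-\lambda)p_2) \geq \lambda R_c(p_1) + (1-\lambda) R_c(p_2)$, applying monotonicity of $g$ in its first argument reverses this into an upper bound on $D^{\dagger}$ at the convex combination, and joint convexity of $g$ concludes the chain.

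The main obstacle I anticipate is ensuring the correct interplay between monotonicity and convexity in the composition step, as this rule is less routine than the univariate version: it relies essentially on both the joint convexity of $g$ and the specific direction of monotonicity of $g$ in its first argument, which is why I highlight the derivation of non-increasingness from [S2] via the inversion. A secondary subtlety is handling the boundary when $R_s^{\rm th}$ is finite, resolved cleanly by extending $\psi \equiv 0$ past the threshold so that convexity and monotonicity extend to all of $[0,\infty)$.
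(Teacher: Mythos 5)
Your proof is correct, but it follows a genuinely different route from the paper's. The paper argues in two cases: for $R_s^{\rm th}=\infty$ it computes the Hessian of $D^{\dagger}(p,q)$ explicitly and verifies positive definiteness via Sylvester's criterion using [S2], [S3] and [C2]; for $R_s^{\rm th}<\infty$ it observes that the zero-distortion set $\lbrace q \leq R_c(p)/R_s^{\rm th}\rbrace$ is convex (again by concavity of $R_c$) and then checks convexity of the restriction $g(\lambda)$ of $D^{\dagger}$ to an arbitrary line segment by a case analysis on how the segment crosses that set. You instead write $D^{\dagger}(p,q)=q\,\psi(R_c(p)/q)$ with $\psi=R_s^{-1}$, note that $\psi$ is convex and non-increasing (the inverse of a decreasing convex function is decreasing and convex, and extending by zero past $R_s^{\rm th}$ preserves both properties since $\psi$ is non-negative, decreasing and vanishes at the threshold), invoke joint convexity of the perspective $g(u,q)=q\,\psi(u/q)$, and finish with the vector composition rule (convex outer function, non-increasing in the argument where the inner function $R_c$ is concave, affine in $q$). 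This is cleaner and more uniform: it dispenses with second-derivative computations, requires less smoothness of $R_s$ and $R_c$, and absorbs the finite-$R_s^{\rm th}$ case into the same argument rather than a separate geometric case analysis. What the paper's computation buys in exchange is the explicit positive-definiteness of the Hessian on the positive-distortion region, i.e.\ strict convexity there, which your argument does not directly yield. Two small points to tighten: the phrase ``differentiating under the minimum'' is out of place --- monotonicity of $g(\cdot,q)$ in $u$ follows immediately from monotonicity of $\psi$ with no differentiation; and you should say a word about why the glued $\psi$ is convex at the junction $r=R_s^{\rm th}$ (the one-sided derivatives satisfy $\psi'(r^-)\leq 0=\psi'(r^+)$; in fact by [S3] the left derivative tends to $0$ there, so the extension is even $C^1$).
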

\begin{proof}
See Appendix A. 
\end{proof}

Based on Lemma \ref{lemma:1} as well as Jensen's inequality, we establish a distortion lower bound as below
\begin{align}\label{eq:28}
\mathbb{E}_{\pi}\left[D^{\dagger}\left(p(Z),\frac{1}{\kappa(Z)}\right)\right] &\geq D^{\dagger}\left(\mathbb{E}_{\pi}\left[p(Z)\right],\mathbb{E}_{\pi}\left[\frac{1}{\kappa(Z)}\right]\right) \\ 
\label{eq:28:2}&\geq  D^{\dagger}\left(\frac{\delta}{\lambda} \left( 1 - e^{-\lambda L} \right), 1 \right)\\
\nonumber &\hspace{-1mm}:= D_{\rm LB},
\end{align}
where \eqref{eq:28:2} follows from \eqref{eq:26:1} and the fact that the function $D^{\dagger}(p,1/\kappa)$ is a non-increasing function of $p$, and further recalling that $‎\mathbb{E}_{\pi}[1/\kappa(Z)]=1‎$ from \eqref{eq:20}. The lower bound in \eqref{eq:28:2} holds for any transmission power $p(z)$ and mismatch factor $\kappa(z)$ that satisfies \eqref{eq:25:3} and \eqref{eq:25:4}.
\subsubsection{Infinite Battery Capacity}
When the capacity of the battery is infinite, i.e., $L \to \infty$, and therefore the chance of battery overflow is zero, for any ergodic power policy $p(z)$ the lower bound in \eqref{eq:28:2} simplifies to
\begin{align}\label{eq:95-1} 
\mathbb{E}_{\pi}\left[D^{\dagger}\left(p(Z),\frac{1}{\kappa(Z)}\right)\right] &\geq D^{\dagger}\left( \frac{\delta}{\lambda} ,1 \right).
\end{align}
One should note that if the leakage rate is zero, every ergodic power policy $p(z)$ is such that $\mathbb{E}_{\pi}\left[ p(Z) \right]=\delta/\lambda$. This suggests that in the infinite battery capacity case with no leakage, separate source and channel coding with a constant mismatch factor $\kappa(z)=1$ is optimal. More precisely, consider the following choice of power policy
\begin{equation}\label{eq:95}
p^{\star}(z) = \frac{\delta}{\lambda} + \epsilon, \quad z>0,
\end{equation} 
with $p^{\star}(0)=0$, and $\epsilon>0$ is a small positive number that ensures the storage process is positive recurrent, and choose $\kappa^{\star}(z)=1$ for $z\geq0$. Then,
\begin{equation}\label{eq:97}
\mathbb{E}_{\pi}\left[ p^{\star}(Z) \right] = \left( \frac{\delta}{\lambda}+\epsilon \right) \left( 1-\pi_0 \right), 
\end{equation}
and since we must have $\mathbb{E}_{\pi}[p^{\star}(z)]=\delta/\lambda$, then
\begin{equation}\label{eq:98}
1- \pi_0 = \frac{\delta / \lambda}{\delta / \lambda +\epsilon}.
\end{equation}
In particular, we then compute the total average distortion of this scheme as
\begin{equation}\label{eq:98:1}
\mathbb{E}_{\pi}\hspace{-1mm}\left[D^{\dagger}\left(p^{\star}(Z),\frac{1}{\kappa^{\star}(Z)}\right)\right]\hspace{-1mm}=\hspace{-1mm}(1-\pi_0) D^{\dagger}\left( \frac{\delta}{\lambda}+\epsilon ,1 \right) + \pi_0 D_{\rm max}.
\end{equation}
As $\epsilon \to 0$ the atom $\pi_0$ tends to zero from \eqref{eq:98} and the lower bound in \eqref{eq:95-1} is thus asymptotically achieved.
\subsection{Achievable Resource Allocation Scheme}
The two functions $p(z)$ and $\kappa(z)$ are the policies that we wish to design in this subsection to manage the limited resources of the system in such a way that the average distortion in \eqref{eq:25} is minimized. To this end, we use a calculus of variations technique which provides necessary conditions for a local and therefore global optimal solution to our optimization problem. 
\begin{figure*}[!t]
\vspace{-0.3cm}
\normalsize
\setcounter{equation}{62}
\begin{align}  
\nonumber &\mathbb{E}_{\pi^{\epsilon}}‎\left[D^{\dagger}\left(p^{\epsilon}(Z),\frac{1}{\kappa(Z)} \right)\right] \\ 
\label{eq:45} &=  \pi_0 D^{\dagger}(p(0), \kappa(0)) + \int_{0^+}^L   \left( D(p(z),\kappa(z)) 
+ \epsilon \frac{\partial D(p(z),\kappa(z))}{\partial p(z)} \frac{d p^{\epsilon}(z)}{d \epsilon} \Big|_{\epsilon=0} \right) \times \left( \frac{f(z)+ \epsilon h(z)}{\kappa(z)} \right) dz \\
 &= \mathbb{E}_{\pi} ‎\left[D^{\dagger}\left(p(Z),\frac{1}{\kappa(Z)} \right)\right]
 + \epsilon \int_{0^+}^L \left( D(p(z),\kappa(z)) \frac{h(z)}{\kappa(z)} +  \frac{\partial D(p(z),\kappa(z))}{\partial p(z)} \frac{d p^{\epsilon}(z)}{d \epsilon} \Big|_{\epsilon=0} \frac{f(z)}{\kappa(z)} \right) dz, \label{eq:46}
\end{align}
\hrulefill
\vspace*{4pt}
\end{figure*}
\setcounter{equation}{42}

\subsubsection{Calculus of Variations}
We define $f^{\epsilon}(z)$ and $1/\kappa^{\epsilon}(z)$ as a perturbed density function and perturbation of the inverse mismatch factor, respectively, i.e.,
\begin{align}\label{eq:29}
f^{\epsilon}(z) &:=f(z)+\epsilon h(z), \\
\label{eq:29:1}\frac{1}{\kappa^{\epsilon}(z)} &:=\frac{1}{\kappa(z)}+\epsilon g(z),
\end{align}
where $h(z)$ and $g(z)$ are continuous and bounded perturbation functions on $(0,  L]$, with $h(0^+)=h(L)=0$ and $g(0^+)=g(L)=0$. For sufficiently small $\epsilon>0$, the perturbed density function $f^{\epsilon}(z)$ satisfies \eqref{eq:25:2} only if
\begin{align}\label{eq:30}
\pi_0 + \int_{0^+}^L f^{\epsilon}(z) dz &= \pi_0 + \int_{0^+}^L f(z) dz + \epsilon \int_{0^+}^L h(z) dz \\
&= 1,
\end{align}
which due to \eqref{eq:17} is true for all $\epsilon > 0$ iff
\begin{equation}\label{eq:31}
\int_{0^+}^L h(z)dz =0 .
\end{equation}
In addition, from \eqref{eq:25:3} we derive the two following conditions 
\begin{align}
\label{eq:33}  \frac{\pi_0}{\kappa(0)} + \int_{0^+}^L \frac{f(z)}{\kappa^{\epsilon}(z)} dz &=1,\\
\label{eq:32}  \frac{\pi_0}{\kappa(0)} + \int_{0^+}^L \frac{f^{\epsilon}(z)}{\kappa(z)} dz &=1.
\end{align}
We simplify \eqref{eq:33} using \eqref{eq:29:1} as follows
\begin{align}
\nonumber \frac{\pi_0}{\kappa(0)} +& \int_{0^+}^L \left( \frac{1}{\kappa(z)} + \epsilon g(z) \right) f(z) dz\\
 &= \frac{\pi_0}{\kappa(0)} + \int_{0^+}^L  \frac{f(z)}{\kappa(z)} dz + \epsilon \int_{0^+}^L  g(z)f(z) dz \label{eq:36} \\
\label{eq:36:1} &= 1.
\end{align}
Similarly, we simplify \eqref{eq:32} using \eqref{eq:29} as
\begin{align}
\nonumber \frac{\pi_0}{\kappa(0)} +& \int_{0^+}^L  \frac{f^{\epsilon}(z)}{\kappa(z)} dz\\
&= \frac{\pi_0}{\kappa(0)} + \int_{0^+}^L  \frac{f(z)}{\kappa(z)} dz + \epsilon \int_{0^+}^L  \frac{h(z)}{\kappa(z)} dz \label{eq:36:2} \\
\label{eq:36:3} &= 1.
\end{align} 
Thus, analogous to the constraint in \eqref{eq:31}, both \eqref{eq:36} and \eqref{eq:36:2} are true for sufficiently small $\epsilon$ iff the perturbation functions also satisfy the following constraints
\begin{align}
\label{eq:37:1}\int_{0^+}^L g(z)f(z)dz &= 0\\
\label{eq:37}\int_{0^+}^L \frac{h(z)}{\kappa(z)} dz &= 0.
\end{align}
\begin{figure*}[!t]
\vspace{-0.3cm}
\normalsize
\setcounter{equation}{72}
\begin{align}  
\nonumber &\mathbb{E}_{\pi} \left[D^{\dagger}\left(p(Z),\frac{1}{\kappa^{\epsilon}(Z)} \right) \right] \\ 
 &= \mathbb{E}_{\pi} \left[D^{\dagger}\left(p(Z),\frac{1}{\kappa(Z)} \right) \right]
 + \epsilon \int_{0^+}^L \left( D(p(z),\kappa(z)) g(z)f(z)- \frac{\partial D(p(z),\kappa(z))}{\partial \kappa(z)} \kappa(z) g(z)f(z) \right) dz.
\label{eq:57} 
\end{align}
\hrulefill
\vspace*{4pt}
\end{figure*}
\setcounter{equation}{55}
For sufficiently small $\epsilon$, two necessary conditions for a local and therefore a global optimal solution to the optimization problem in \eqref{eq:25}-\eqref{eq:25:4} are
\begin{align}\label{eq:38}
\hspace{-3mm}\mathbb{E}_{\pi^{\epsilon}}‎\left[D^{\dagger}\left(p^{\epsilon}(Z),\frac{1}{\kappa(Z)} \right)\right] &‎\geq \mathbb{E}_{\pi} \left[D^{\dagger}\left(p(Z),\frac{1}{\kappa(Z)} \right) \right],\\
\label{eq:39}
\hspace{-3mm}\mathbb{E}_{\pi} \left[D^{\dagger}\left(p(Z),\frac{1}{\kappa^{\epsilon}(Z)} \right) \right] ‎&\geq \mathbb{E}_{\pi} \left[D^{\dagger}\left(p(Z),\frac{1}{\kappa(Z)} \right) \right],
\end{align}
where $\mathbb{E}_{\pi^{\epsilon}}$ is the expectation with respect to the probability measure with the perturbed density function $f^{\epsilon}(z)$, (see \eqref{eq:29}). 

We now expand the l.h.s of \eqref{eq:38} as below 
\begin{align}
\nonumber &\mathbb{E}_{\pi^{\epsilon}}‎\left[D^{\dagger}\left(p^{\epsilon}(Z),\frac{1}{\kappa(Z)} \right)\right] \\ &\hspace{2mm}= \pi_0 D^{\dagger} \left( p(0), \kappa(0) \right) 
+ \int_{0^+}^L D^{\dagger}(p^{\epsilon}(z),\kappa(z)) f^{\epsilon}(z) dz.\label{eq:41}
\end{align}
We then use \eqref{eq:25:1} to compute $p^{\epsilon}(z)$ as follows 
\begin{align}
\nonumber p^{\epsilon}&(z) \\ 
 &\hspace{-1mm}= \delta e^{-\lambda z} \frac{\left( \pi_0 + \int_{0^+}^z e^{\lambda u} f^{\epsilon}(u) du \right) }{f^{\epsilon}(z)} -\ell (z) \label{eq:40} \\
&= \delta e^{-\lambda z} \frac{\left(\pi_0 + \int_{0^+}^z e^{\lambda u} f(u) du +\epsilon  \int_{0^+}^z e^{\lambda u} h(u) du \right)}{f(z)+\epsilon h(z)} - \ell (z).
\end{align}
We also compute the derivative of $p^{\epsilon}(z)$ with respect to $\epsilon$ as follows
\begin{align}\label{eq:43}
 \frac{d p^{\epsilon}(z)}{d \epsilon} \Big|_{\epsilon=0} =\delta e^{-\lambda z} \frac{\int_{0^+}^z e^{\lambda u} h(u) du}{f(z)}- \frac{h(z)}{f(z)} p(z). 
\end{align}
Based on \eqref{eq:43}, we expand $D(p^{\epsilon}(z),\kappa(z))$ to first order in $\epsilon$, i.e.,
\begin{align}
\nonumber D&(p^{\epsilon}(z),\kappa(z)) \\ 
&= D(p(z),\kappa(z)) 
+ \epsilon \frac{\partial D(p(z),\kappa(z))}{\partial p(z)} \frac{d p^{\epsilon}(z)}{d \epsilon} \Big|_{\epsilon=0}+‎ O(\epsilon^2)‎.\label{eq:42}
\end{align}
Substituting $D(p^{\epsilon}(z),\kappa(z))$ and $f^{\epsilon}(z)$ into \eqref{eq:41} results in \eqref{eq:45}-\eqref{eq:46}, where we have neglected the higher order terms of $\epsilon$ (i.e., $O(\epsilon^2)$).
\setcounter{equation}{64}
By substituting \eqref{eq:46} into \eqref{eq:38}, we establish the following necessary condition for a local and therefore a global optimal power policy $p(z)$
\begin{align}
\nonumber &\int_{0^+}^L D(p(z),\kappa(z)) \frac{h(z)}{\kappa(z)} dz -\int_{0^+}^L \frac{\partial D(p(z),\kappa(z))}{\partial p(z)}  \frac{h(z)}{\kappa(z)}p(z) dz \\
\nonumber &\hspace{2mm}+ \delta \left[ \int_{0^+}^L \frac{\partial D(p(z),\kappa(z))}{\partial p(z)} \frac{1}{\kappa(z)} \left(\int_{0^+}^z e^{-\lambda (z-u)} h(u) du \right) dz \right] \\
\label{eq:47} &= 0. 
\end{align}
By changing the order of the double integral in the third term of \eqref{eq:47} we obtain
\begin{align}
\nonumber \int_{0^+}^L  h(z) &\Bigg( \delta \int_z^L \frac{\partial D(p(u),\kappa(u))}{\partial p(u)} \frac{e^{-\lambda (u-z)}}{\kappa(u)} du +  \frac{D(p(z),\kappa(z))}{\kappa(z)} \\ 
\label{eq:48} &- \frac{\partial D(p(z),\kappa(z))}{\partial p(z)}\frac{p(z)}{\kappa(z)} \Bigg) dz=0.
\end{align} 
Equation \eqref{eq:48} holds for all perturbation functions $h(z)$, that satisfy \eqref{eq:31} and \eqref{eq:37}. We can thus rewrite \eqref{eq:48} as follows
\begin{align}
\nonumber \int_{0^+}^L \hspace{-2mm} h(z) &\Bigg( \delta \int_z^L \frac{\partial D(p(u),\kappa(u))}{\partial p(u)} \frac{e^{-\lambda (u-z)}}{\kappa(u)} du + \frac{D(p(z),\kappa(z))}{\kappa(z)} \\ 
 &- \frac{\partial D(p(z),\kappa(z))}{\partial p(z)}\frac{p(z)}{\kappa(z)}
+ \frac{C_1}{\kappa(z)} +C_2 \Bigg) dz=0, \label{eq:49}
\end{align}
where $C_1$ and $C_2$ are two free parameters. Therefore, based on the fundamental lemma of the calculus of variations we derive the following necessary condition
\begin{align}
\nonumber \delta &e^{\lambda z} \kappa(z) \int_z^L \frac{\partial D(p(u),\kappa(u))}{\partial p(u)} \frac{e^{-\lambda u}}{\kappa(u)} du + D(p(z),\kappa(z)) \\
\label{eq:50} &- \frac{\partial D(p(z),\kappa(z))}{\partial p(z)}p(z) + C_1+C_2 \kappa(z) =0,
\end{align}
where $z>0$. Eq. \eqref{eq:50} is an integro-differential equation involving $p(z)$ and $\kappa(z)$. However, by multiplying both sides of \eqref{eq:50} by $e^{-\lambda z}$ and differentiating both sides with respect to $z$ and further simplifications, we derive a first order non-linear autonomous ordinary differential equation (ODE) equivalent to \eqref{eq:50}.

We now consider the inequality in \eqref{eq:39}, where the l.h.s can be written as
\begin{align}
\nonumber \mathbb{E}_{\pi} &\left[D^{\dagger}\left(p(Z),\frac{1}{\kappa^{\epsilon}(Z)} \right) \right] \\ 
&=  \pi_0 D^{\dagger} \left( p(0),\kappa(0) \right) 
 + \int_{0^+}^L D^{\dagger}(p(z),\kappa^{\epsilon}(z)) f(z) dz.\label{eq:52}
\end{align}
We compute $\kappa^{\epsilon}(z)$ using \eqref{eq:29:1} as follows
\begin{align}\label{eq:53}
\kappa^{\epsilon}(z) = \frac{1}{\dfrac{1}{\kappa(z)}+\epsilon g(z)},
\end{align}
where its derivative with respect to $\epsilon$ at $\epsilon=0$ can easily be obtained as below
\begin{align}\label{eq:55}
\frac{d \kappa^{\epsilon}(z)}{d \epsilon} \Big|_{\epsilon=0} = - g(z) \kappa^2(z).
\end{align} 
Similar to \eqref{eq:42}, we write the Taylor series of $D(p(z),\kappa^{\epsilon}(z))$ up to first order term in $\epsilon$ as below
\begin{align}
\nonumber D&(p(z),\kappa^{\epsilon}(z)) \\
 &= D(p(z),\kappa(z))
+ \epsilon \frac{\partial D(p(z),\kappa(z))}{\partial \kappa(z)} \frac{d \kappa^{\epsilon}(z)}{d \epsilon} \Big|_{\epsilon=0}+‎ O(\epsilon^2)‎.\label{eq:54}
\end{align} 
Therefore, by substituting \eqref{eq:29:1}, \eqref{eq:55} and \eqref{eq:54} into \eqref{eq:52}, this is then reduced to \eqref{eq:57}.\setcounter{equation}{73}

By substituting \eqref{eq:57} into \eqref{eq:39} and neglecting the higher order terms of $\epsilon$, we establish the following necessary condition for a local and therefore a global optimal mismatch factor $\kappa(z)$,
\begin{align}\label{eq:58}
\int_{0^+}^L g(z) f(z) \left[ D(p(z),\kappa(z)) - \frac{\partial D(p(z),\kappa(z))}{\partial \kappa(z)} \kappa(z) \right] dz = 0.
\end{align}
Equation \eqref{eq:58} holds for all perturbation functions $g(z)$ that satisfy \eqref{eq:37:1}. Therefore, we rewrite \eqref{eq:58} as
\begin{align}\label{eq:59}
\hspace{-1mm}\int_{0^+}^L \hspace{-2mm} g(z) f(z) \hspace{-1mm} \left[ D(p(z),\kappa(z)) - \frac{\partial D(p(z),\kappa(z))}{\partial \kappa(z)} \kappa(z) + \beta \right] \hspace{-1mm} dz \hspace{-1mm}= 0,
\end{align} 
where $\beta$ is a constant. Based on the fundamental lemma of the calculus of variations, we thus derive the following necessary condition
\begin{align}\label{eq:60}
D(p(z),\kappa(z)) - \frac{\partial D(p(z),\kappa(z))}{\partial \kappa(z)} \kappa(z) + \beta = 0, \quad z>0.
\end{align}
Solutions to the equations in \eqref{eq:50} and \eqref{eq:60} determine a locally optimal power policy $p(z)$ as well as a locally optimal mismatch factor policy $\kappa(z)$.
\subsubsection{Instantaneous distortion}\label{sub:1}
In this part, we discuss an interesting consequence of \eqref{eq:60} on the instantaneous distortion $D(p(z),\kappa(z)), z>0$. Specifically, as formally presented in Lemma \ref{lemma:5}, a locally optimal solution has the property that the power policy and the mismatch factor policy are adjusted in such a way that $D(p(z),\kappa(z))$ is constant for $z>0$. In other words, if the transmission power is decreased (or increased) due to a change in the battery charge, a locally optimal mismatch factor will always dynamically adjust so that the instantaneous distortion is maintained to a constant level. As elaborated in section \ref{sec:4}, one can think of the transmission power $p(z)$ and the mismatch factor $\kappa(z)$ as limited communication resources that must maintain long-term averages, and thus the transmitter is trading off one for the other. More precisely, since the mismatch factor must satisfy \eqref{eq:20} (i.e., the inverse mismatch factor averages to one), as the availability of one communication
resource (say the transmission power) increases due to a large battery charge, the
transmitter employs a large transmission power and saves on the other communication resource (i.e.,
the mismatch factor) by then using fewer channel uses per source symbol. Likewise, when the battery charge is low, the transmitter reduces its transmission power, but employs a large mismatch factor. Formally, we have the following Lemma.
\begin{lemma}\label{lemma:5}
The constant $\beta$ in \eqref{eq:60} must be in the range $\beta_{\rm min} < \beta < \beta_{\rm max}$, where 
\begin{equation}\label{eq:62:1}
\beta_{\rm max}:= \lim_{D \downarrow 0} \hspace{3mm} \frac{R_s(D)}{R_s^{\prime}(D)},
\end{equation} 
and
\begin{equation}\label{eq:62:2}
\beta_{\rm min}:= \lim_{D \uparrow D_{\rm max}} \hspace{3mm} \frac{R_s(D)}{R_s^{\prime}(D)} - D_{\rm max}.
\end{equation}
Furthermore, for every such a choice of $\beta$, \eqref{eq:60} results in a unique constant solution for $D(p(z),\kappa(z)), \; z>0$. 
\end{lemma}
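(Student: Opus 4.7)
The plan is to eliminate the $z$-dependence in \eqref{eq:60} by reducing it to a scalar equation in $D$ alone, and then to analyze that equation using the rate-distortion axioms [S1]-[S4]. The first step is to differentiate the separation identity $R_s(D)=\kappa R_c(p)$ from \eqref{eq:4} implicitly with respect to $\kappa$ at fixed $p$, which yields $R_s'(D)(\partial D/\partial \kappa) = R_c(p) = R_s(D)/\kappa$, so that
$$\kappa\,\frac{\partial D}{\partial \kappa} \;=\; \frac{R_s(D)}{R_s'(D)}.$$
Substituting this into \eqref{eq:60} collapses it to the pointwise scalar equation
$$\phi(D(z)) \;=\; \beta, \qquad \phi(D) \;:=\; \frac{R_s(D)}{R_s'(D)} - D, \qquad D \in (0,D_{\max}).$$
Since neither $\phi$ nor $\beta$ depends on $z$, any solution $D(p(z),\kappa(z))$ is automatically constant on $\{z>0\}$, which already delivers the constancy assertion of the Lemma, provided we can show that for each admissible $\beta$ exactly one $D^{\star} \in (0,D_{\max})$ satisfies $\phi(D^{\star})=\beta$.

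The second step is to verify that $\phi$ is a continuous, strictly monotone bijection from $(0,D_{\max})$ onto $(\beta_{\min},\beta_{\max})$. A direct differentiation gives
$$\phi'(D) \;=\; -\frac{R_s(D)\,R_s''(D)}{(R_s'(D))^2},$$
which is strictly negative on $(0,D_{\max})$ by [S2] combined with $R_s(D)>0$; hence $\phi$ is strictly decreasing and (again by [S2]) continuous. The endpoint limits $\lim_{D\downarrow 0}\phi(D) = \beta_{\max}$ and $\lim_{D\uparrow D_{\max}}\phi(D) = \beta_{\min}$ follow directly from the stated definitions, since the $-D$ term vanishes in the first limit and contributes $-D_{\max}$ in the second. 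The intermediate value theorem then yields the required bijection.

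Putting the two steps together, for every $\beta \in (\beta_{\min},\beta_{\max})$ there is a unique $D^{\star} \in (0,D_{\max})$ with $\phi(D^{\star}) = \beta$, and this is the constant value of the instantaneous distortion $D(p(z),\kappa(z))$ along a locally optimal pair. Conversely, since $p(z)>0$, $\kappa(z)>0$ on $(0,L]$ together with [S1] and [S3] force the instantaneous distortion to lie strictly inside $(0,D_{\max})$ (otherwise either infinite power or zero transmitted information would be required), $\beta = \phi(D^{\star})$ must lie in $(\beta_{\min},\beta_{\max})$, establishing the ``must'' part of the statement. The main technical subtlety is the behavior of $R_s(D)/R_s'(D)$ as $D \downarrow 0$ when $R_s^{\rm th} = \infty$ (e.g.\ the Gaussian source), where both numerator and denominator diverge; one must confirm that the ratio has a well-defined finite limit under [S1]-[S4], which can be handled either by an l'H\^opital-type argument using [S3] or by direct inspection of the canonical examples given earlier in the paper.
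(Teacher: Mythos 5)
Your proposal is correct and follows essentially the same route as the paper's proof in Appendix~\ref{app:3}: differentiate the separation identity \eqref{eq:4} to get $\kappa\,\partial D/\partial\kappa = R_s(D)/R_s^{\prime}(D)$, reduce \eqref{eq:60} to the $z$-free scalar equation $\beta = R_s(D)/R_s^{\prime}(D) - D$, and establish uniqueness and the range $(\beta_{\rm min},\beta_{\rm max})$ via the identical derivative computation showing the right-hand side is strictly decreasing on $(0,D_{\rm max})$. The ``technical subtlety'' you flag about the limit as $D\downarrow 0$ is already resolved by the monotonicity you proved (the limit exists in $(-\infty,+\infty]$), which is how the paper implicitly handles it by defining $\beta_{\rm max}$ as a supremum.
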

\begin{proof}
See Appendix \ref{app:3}.
\end{proof}
It is easy to verify that for both Gaussian and binary sources, $\lim_{D \downarrow 0} \hspace{3mm} \dfrac{R_s(D)}{R_s^{\prime}(D)} = 0$ and $\lim_{D \uparrow D_{\rm max}} \hspace{3mm} \dfrac{R_s(D)}{R_s^{\prime}(D)} = 0$. Therefore, for the Gaussian source we have the bound $- \sigma^2 < \beta < 0$. Likewise, for the binary source we have the bound $- \min \lbrace \mathsf{p}, 1-\mathsf{p} \rbrace < \beta < 0$.
\begin{remark}
Note that the derivation starting with \eqref{eq:52} and leading to the differential equation for the mismatch factor in \eqref{eq:60} has not used the assumption that $B(z)=\exp(-\lambda z)$. Thus, the structural result on the instantaneous distortion being constant for $z > 0$, Lemma 2 as well as the mismatch factor and instantaneous distortion in the Gaussian case to be shown in \eqref{eq:79} and \eqref{eq:83} are valid for all energy distributions $B(z)$.
\end{remark}

\subsection{A Constant Bandwidth Mismatch Factor}  
So far, we have studied a general JSCC scheme where the mismatch factor is adaptively adjusted according to the available battery charge. However it is also interesting to compare the results with the simpler scheme where the mismatch factor is fixed. Thus, we now consider the case where the mismatch factor is constant and does not adapt to the battery charge. For a fair comparison with the general case of dynamic mismatch factor, we retain the constraint \eqref{eq:25:3} which results in a constant bandwidth mismatch factor of unity, i.e., $\kappa(z)=1, \; \forall z \geq 0$. Therefore, there is only one design parameter, the transmission power $p(z)$, to minimize the total average distortion at the receiver. We thus have 
\begin{equation}
R_s(D(z))=R_c\left( p\left(z \right) \right),
\end{equation}
and thereby the distortion-rate function is computed only in terms of the transmission power $p(z)$. The optimization problem is therefore described as follows
\begin{align}\label{eq:63}
 &\hspace{-1mm}\inf_{f(z),\pi_0}  \hspace{5mm}   \pi_0 \tilde{D}(p(0)) + \int_{0^+}^L \tilde{D}\left(p(z) \right) f(z) dz\\
\label{eq:63:1}
\mbox{s.t.} &\hspace{0cm}\; f(z)\left( p(z) + \ell (z) \right)=\delta e^{-\lambda z}\Big(\pi_0 + \int_{0^+}^L e^{\lambda u}f(u)du\Big) \\ 
\label{eq:63:2}
&\pi_0+\int_{0^+}^L f(z) dz =1\\
\label{eq:63:3}
&f(z)\geq0,\quad \pi_0\geq0,
\end{align}
where $\tilde{D}(p(z)):=D(p(z),1)$ is the average distortion with bandwidth mismatch factor of unity. Moreover, the distortion lower bound in this case is the same as $D_{\rm LB}$ in \eqref{eq:26:1}. As a necessary condition for a local and thus a global optimal power policy $p(z)$, we can directly obtain the following equation by replacing $\kappa(z)=1$ into \eqref{eq:50} with the substitution $\lambda(C_1 + C_2) = C$,
\begin{equation}\label{eq:71}
\delta \int_z^L  \tilde{D}^{\prime}(p(u)) e^{-\lambda (u-z)} du + \tilde{D}(p(z)) -  \tilde{D}^{\prime}(p(z)) p(z) + C =0,
\end{equation}
where $\tilde{D}^{\prime}(\cdot)$ denotes the derivative of $\tilde{D}(\cdot)$, and $C$ is a free parameter. As explained below \eqref{eq:50}, we can derive a first order non-linear autonomous ODE equivalent to \eqref{eq:71} which is found to be
\begin{equation}\label{eq:72}
\lambda \tilde{D}(p(z)) + \left(  \delta - \lambda p(z) \right) \tilde{D}^{\prime}(p(z)) + p(z) p^{\prime}(z) \tilde{D}^{\prime\prime} (p(z))  +   C = 0,
\end{equation}
where $\tilde{D}^{\prime\prime}(\cdot)$ denotes the second derivative of $\tilde{D}(\cdot)$.
\begin{remark}
This ODE is identical in form to (102) in \cite{khuzani2013online}. Any solution $p(z)$ of the ODE in \eqref{eq:72} for $C < - \lambda \tilde{D} (\frac{\delta}{\lambda})$ is non-decreasing in $z$, for $p(z)>0$. The proof of this is similar to that of \cite[Lemma 1]{khuzani2013online} except for the change in the direction of the inequality which is due to the fact that the rate function $r(p)$ in (102) of \cite{khuzani2013online}, which is concave, is replaced with the distortion function $\tilde{D}(p)$ in \eqref{eq:72}, which is convex.
\end{remark}

\section{Gaussian Source and Channel}\label{sec:6}
In this section, we specialize our results for a Gaussian source using the rate-distortion function $R_s(D)$ given in \eqref{eq:3:1}. We assume that the Shannon rate function $R_c(p)= \dfrac{1}{2} \log_2 (1+p/N)$ is considered for the channel coding rate. Consequently, the distortion function $D(p,\kappa)$ can be computed as
\begin{equation}\label{eq:74}
D(p(z),\kappa(z))= \sigma^2 \left( 1+\frac{p(z)}{N} \right)^{-\kappa(z)},
\end{equation} 
or equivalently
\begin{equation}\label{eq:74:1}
D^{\dagger}(p(z),1/\kappa(z))= \dfrac{\sigma^2}{\kappa(z)} \left(1+\dfrac{p(z)}{N} \right)^{-\kappa(z)}.
\end{equation}
Moreover, the lower bound on the average distortion in this case is
\begin{equation}\label{eq:75}
\mathbb{E}_{\pi}\left[D^{\dagger}\left(p(Z),\frac{1}{\kappa(Z)}\right)\right] \geq \sigma^2 \left(1+\frac{\delta}{\lambda}\frac{ \left( 1- e^{-\lambda L} \right) }{N}\right)^{-1}.
\end{equation}
To determine the structure of a locally optimal scheme, we first replace $D(p(z),\kappa(z))$ in \eqref{eq:60} by its closed form expression in \eqref{eq:74} to obtain
\begin{equation}\label{eq:76}
\sigma^2 \left( 1+\frac{p(z)}{N} \right)^{-\kappa(z)}\hspace{-2mm} \left( 1+\kappa(z) \ln (1+\frac{p(z)}{N})  \right) + \beta =0, \; z>0,
\end{equation}
which can be rewritten as
\begin{align}
\nonumber \left( -1 -\kappa(z)\ln \left( 1+\frac{p(z)}{N} \right) \right) &\exp \left( \hspace{-1mm}-1 -\kappa(z) \ln \left( 1+\frac{p(z)}{N} \right) \hspace{-1mm}\right) \\
\label{eq:78}  &= \frac{\beta}{\sigma^2 e}, 
\end{align}
for $z>0$. Therefore, we have
\begin{equation}\label{eq:79}
\kappa \left( z;\beta \right)= -\frac{W_n\left( \dfrac{\beta}{\sigma^2 e} \right)+1}{\ln \left( 1+\dfrac{p(z)}{N} \right)}, \quad z>0,
\end{equation}
where $W_n(\cdot)$ denotes the $LambertW$ function \cite{corless1996lambertw} that takes either real or complex values and has an infinite number of branches, each denoted by an integer $n$. The notation $\kappa \left( z;\beta \right)$ emphasizes the dependence of $\kappa(z)$ in \eqref{eq:79} on the choice of $\beta$. Moreover, \eqref{eq:79} shows that $\kappa(z;\beta)$ is a decreasing function of $z$, whenever $p(z)$ is an increasing function of $z$ and vice versa.
\begin{remark}\label{lemma:3}
Based on the properties of the $LambertW$ function, $n=0$ and $n=-1$ are the only branches that yield a real value for $W_n(x)$, where $W_0(x)\geq -1$ for $x \geq -1/e$ and $W_{-1}(x)<-1$ for $-1/e < x < 0$. In addition, we require that $W_n(\dfrac{\beta}{\sigma^2 e}) < -1$ in order to have $\kappa(z;\beta) >0$ in \eqref{eq:79}. Therefore, $n=-1$ is the only acceptable branch that results in real positive values of $\kappa(z;\beta)$. This provides another proof for the fact that $-1/e < \dfrac{\beta}{\sigma^2 e} < 0$ or equivalently $-\sigma^2 < \beta <0$ (see Lemma \ref{lemma:5}). 
\end{remark}
From \eqref{eq:79}, for every fixed value of $\beta$, we have
\begin{align}
\nonumber -\frac{1}{2} &\left( W_{-1}  \left( \frac{\beta}{\sigma^2 e} \right) +1 \right) \log_2 e \\ 
 &= \kappa \left( z; \beta \right) \times  \frac{1}{2}\log_2 \left( 1+\frac{p(z)}{N} \right) , \quad z>0. \label{eq:80}
\end{align}
On the other hand, this choice of $\beta$ results in the instantaneous distortion $D(z):=D(p(z),\kappa(z))$ in \eqref{eq:4} as follows
\begin{equation}\label{eq:81}
\frac{1}{2} \log_2 \frac{\sigma^2}{D(z)}= \kappa(z;\beta) \times \frac{1}{2}\log_2 \left( 1+\frac{p(z)}{N} \right) , \quad z\geq 0 .
\end{equation}
Therefore, equating the l.h.s of \eqref{eq:80} and \eqref{eq:81} yields
\begin{equation}\label{eq:82}
\ln \frac{\sigma^2}{D(z)} = - \left( W_{-1} \left( \frac{\beta}{\sigma^2 e} \right)+1 \right), \quad z>0.
\end{equation}
Let $\beta^{\star}$ denote the optimal value of the free parameter $\beta$. Thus, \eqref{eq:82} results in the following associated optimal instantaneous distortion $D^{\star}(z)$ 
\begin{equation}\label{eq:83}
D^{\star}(z) = \left\{ 
  \begin{array}{l l}
    \sigma^2 \exp \left( W_{-1} \left( \dfrac{\beta^{\star}}{\sigma^2 e} \right) +1 \right)\; & \quad z>0 \\
    \sigma^2 \; & \quad z=0.
  \end{array} \right.
\end{equation}
As elaborated in Section \ref{sub:1} and Appendix \ref{app:3}, \eqref{eq:83} is the unique constant solution of \eqref{eq:60} for $z>0$ in the Gaussian case. Specifically, $p(z)$ and $\kappa(z)$ are jointly optimized such that when using a high (respectively low) transmission power, the transmitter uses a low (respectively high) mismatch factor to maintain the optimal instantaneous distortion to a constant value. 

\section{Numerical Results}\label{sec:8}
In this section, we consider numerical solutions to \eqref{eq:50} and \eqref{eq:60} to obtain an efficient power policy $p(z)$ and mismatch factor policy $\kappa(z)$. In particular, although there is no apparent closed-form solution to \eqref{eq:50} for $p(z)$, for every choice of the constants $C_1$, $C_2$ and the initial condition $p(0^+)$, we can apply numerical ODE solution methods. More precisely, given a fixed choice of $\beta$, from \eqref{eq:60} we can in principle solve for $\kappa(z)$ in terms of $p(z)$ (although other than the Gaussian case where a closed-form is found in \eqref{eq:79}, this must be done numerically). We then substitute $\kappa(z)$ thus computed into \eqref{eq:50} and obtain an ODE for $p(z)$ in terms of $C_1$, $C_2$ and $\beta$ that can be solved numerically. Once $p(z)$ is thus found, one can then directly obtain $\kappa(z)$ using \eqref{eq:60} again.

We obtain from \eqref{eq:16} that
\begin{equation}\label{eq:115}
\frac{f(z)e^{\lambda z}}{\pi_0+ \int_{0^+}^z e^{\lambda u}f(u) du} = \frac{\delta}{p(z)+\ell(z)},
\end{equation}
where by integrating both sides over $(0^+,z]$ and performing some simplifications, this becomes
\begin{equation}\label{eq:116}
\pi_0 + \int_{0^+}^z e^{\lambda u} f(u) du = \pi_0 \exp\left( \int_{0^+}^z \frac{\delta}{p(u)+\ell(u)} du \right).
\end{equation}
Taking the derivative of both sides in \eqref{eq:116} with respect to $z$, we compute the density $f(z)$, provided that $\pi_0$ is known, as follows
\begin{equation}\label{eq:117}
f(z) = \pi_0 \frac{\delta e^{-\lambda z}}{p(z)+\ell(z)} \exp\left( \int_{0^+}^z \frac{\delta}{p(u)+\ell(u)} du \right) .
\end{equation}
By combining \eqref{eq:117} and \eqref{eq:25:2}, we compute the atom $\pi_0$ as below
\begin{equation}\label{eq:118}
\pi_0 = \left( 1+ \int_{0^+}^L \frac{\delta e^{-\lambda z}}{p(z)+\ell(z)} \exp\left( \int_{0^+}^z \frac{\delta}{p(u)+\ell(u)} du \right) dz \right)^{-1}\hspace{-3mm}.
\end{equation}
Moreover, the value of the mismatch factor $\kappa(0):=\kappa(z)\vert_{z=0}$ when the battery is exhausted is obtained from \eqref{eq:25:3} as
\begin{equation}\label{eq:119}
\kappa(0) = \frac{\pi_0}{1-\int_{0^+}^L f(z)/ \kappa(z;\beta)  dz}.
\end{equation}

We now study a single-user EH communication system with energy arrival rates $\delta=\lambda=1$, and noise power $N=1$. To find locally optimal policies for $p(z)$ and $\kappa(z)$, one needs to search for optimized values of the free constants $\beta,\; C_1,$ and $C_2$. In the following we separately investigate the cases of Gaussian and binary sources.
\subsection{A Gaussian Source over a Gaussian Channel}
For a standard Gaussian source $\mathcal{N}(0,1)$, we have the bound $-1 < \beta < 0$ from Section \ref{sub:1}. We now examine two cases of leakage: $(i)$ zero leakage rate $\ell(z)=0$ for an ideal battery and $(ii)$ increasing leakage rate $\ell(z)=1-e^{-z}$ for an imperfect battery. Obviously, our analysis is general enough to allow us to study different leakage behaviours $\ell(z)$. However as stated in \cite{devillers2012general}, \textit{``batteries leak most right after being charged"}, and examples of rechargeable batteries with leakage (or self-discharge) rates that increase monotonically with the battery charge are nickel-cadmium and nickel-hydrogen cells \cite{zimmerman2004self}.

We also consider the battery capacities $L=1, 2, ..., 5$. Table \ref{tab:1} and Table \ref{tab:2} show the total average distortion $D_{\rm avg}$, the distortion lower bound $D_{\rm LB}$, and good values of the constants $\beta,\; C_1,$ and $C_2$ found by numerical search for both cases. Furthermore, the initial condition of the ODE for $p(z)$ was chosen to be $p(0^+)=0.001$. This choice of $p(0^+)$ is justified by the fact that a small amount of available energy in the battery should entail a small transmission power, as otherwise, the battery will be completely depleted before the next energy arrival. Numerical simulations have shown that if $p(0^+)$ is chosen to be sufficiently small, it does not significantly change the distortion performance. A similar observation was made in \cite{khuzani2013online}.
\begin{table*}[t!]
\centering
\caption{\label{tab:1}\small Distortion lower bound $D_{\rm LB}$, average distortion $D_{\rm avg}$, and good values of the constants $C_1$, $C_2$ and $\beta$ of a standard Gaussian source $\mathcal{N}(0,1)$, for different battery capacities with the initial condition $p(0^+)=0.001$, when $\ell(z)=0$. The ratio of $D_{\rm avg}$ to $D_{\rm LB}$ quantifies the gap between the average distortion and the lower bound.}
  \begin{tabular}[!t]{ccccc}
\hline\hline
Capacity of the Battery & $D_{\rm LB}$ & $D_{\rm avg}$ & $D_{\rm avg}/D_{\rm LB}$ &  Constants \\
\hline
$L=1$ & $0.6127$  & $0.6971$ & $1.13$& $\beta=-0.9485, C_1=-0.95, C_2=0.24$\\
 \hline
$L=2$ &  $0.5363$ &$0.6147$ & $1.14$ & $\beta=-0.9137, C_1=-0.92, C_2=0.30$\\
  \hline
$L=3$ & $0.5128$&$0.5765$ & $1.12$& $\beta=-0.8940, C_1=-0.90, C_2=0.32$ \\ 
  \hline
$L=4$ &  $0.5046$ & $0.5559$ &$1.10$& $\beta=-0.8822, C_1=-0.89, C_2=0.32$ \\  
  \hline
$L=5$  &  $0.5017$ & $0.5417$  & $1.07$& $\beta=-0.8738, C_1=-0.89, C_2=0.34$ \\
\hline
\end{tabular}
\end{table*}
\begin{table*}[t!]
\centering
\caption{\label{tab:2}\small Distortion lower bound $D_{\rm LB}$, average distortion $D_{\rm avg}$, and good values of the constants $C_1$, $C_2$ and $\beta$ of a standard Gaussian source $\mathcal{N}(0,1)$, for different battery capacities with the initial condition $p(0^+)=0.001$, when $\ell(z)=1-e^{-z}$. The ratio of $D_{\rm avg}$ to $D_{\rm LB}$ quantifies the gap between the average distortion and the lower bound.}
  \begin{tabular}[!t]{ccccc}
\hline\hline
Capacity of the Battery & $D_{\rm LB}$ &$D_{\rm avg}$ & $D_{\rm avg}/D_{\rm LB}$& Constants \\
\hline
$L=1$ &  $0.6127$& $0.7445$  & $1.21$&$\beta=-0.9641, C_1=-0.97, C_2=0.06$ \\
 \hline
$L=2$ &  $0.5363$ & $0.6876$  & $1.28$ &$\beta=-0.9450, C_1=-0.95, C_2=0.17$\\
  \hline
$L=3$ & $0.5128$ & $0.6659$ & $1.29$& $\beta=-0.9366, C_1=-0.94, C_2=0.29$  \\ 
  \hline
$L=4$ & $0.5046$  & $0.6596$ &  $1.30$ &$\beta=-0.9300, C_1=-0.93, C_2=0.32$ \\  
  \hline
$L=5$  & $0.5017$ &$0.6566$  &$1.30$& $\beta=-0.9302, C_1=-0.93, C_2=0.34$ \\
\hline
\end{tabular}
\end{table*}

It is evident from Table \ref{tab:1} that as the battery capacity increases, the achieved distortion decreases. In particular, for $L=5$ when the leakage rate is zero this scheme can achieve a distortion that is at most $7\%$ above the lower bound. In fact, for an ideal battery with infinite capacity as discussed below \eqref{eq:95-1}, the lower bound is asymptotically tight and can be approximated arbitrarily well with a constant transmission power policy and a constant mismatch factor policy. Moreover, for case $(ii)$ with non-zero leakage, the achieved average distortions are larger compared to that of case $(i)$. Note that the distortion lower bound in Table \ref{tab:2} is relatively more loose, since it does not depend on the leakage rate (see \eqref{eq:28:2}), while for the case of a non-zero leakage the optimal performance should depend on the leakage model. Therefore, we do not expect the lower bound to be asymptotically achievable, as it is without leakage. As the leakage model has increasing leakage with increasing battery charge, it is not unreasonable for the lower bound to become more loose with larger battery capacities. Nevertheless, the ratio in Table \ref{tab:2} must saturate to a finite value as $L \to \infty$. This is because $i)$ the distortion lower bound for an infinite battery capacity is $D^{\dagger}\left (\delta/\lambda, 1 \right)$ which is strictly positive in this case, and $ii)$ the achieved distortion will converge to some value, and hence, so will the ratio.

Fig. \ref{fig:b} shows the transmission power policy $p(z)$, for the case of $L=5$, for an ideal battery as well as an imperfect battery with three different leakage rates, i.e., increasing $\ell(z)=1-e^{-z}$, decreasing $\ell(z)=e^{-z}$, and constant $\ell(z)=1$. 
 
\begin{figure}[t!]
\centering
\epsfig{width=9cm, figure=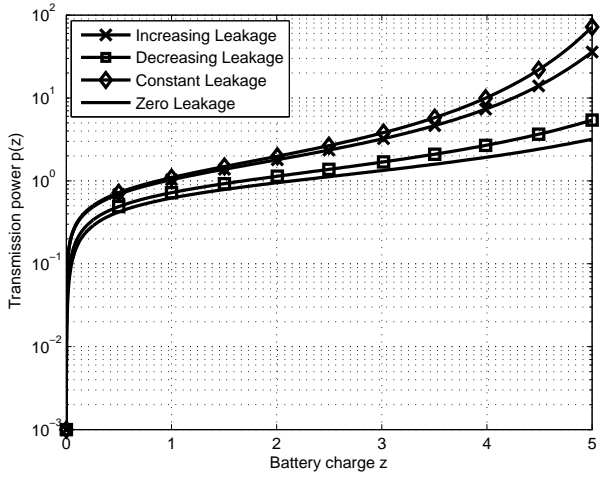}
\caption{\small Power policy $p(z)$ under 4 different scenarios, namely, zero leakage, increasing and decreasing leakage and constant leakage, when $p(0^+)=0.001$.}
\label{fig:b}
\end{figure}
We observe that in all cases, the designed transmission power monotonically increases as the battery charge increases. This is due to the fact that when the remaining charge in the battery is close to the capacity limit, new energy arrivals are likely to make the battery overflow. Therefore, the transmitter consumes a large transmission power in order to avoid losing energy. Interestingly, for the increasing leakage and the constant leakage cases, the allocated transmission power increases faster compared to the transmission power for an ideal battery with the same battery charge. This result is intuitive, since an efficient transmission power policy mitigates the potentially large energy loss due to leakage by rapidly consuming the stored energy before it is lost. Figs. \ref{fig:c} and \ref{fig:d} illustrate the corresponding mismatch factor $\kappa(z)$ and absolutely continuous part of the density function of the available charge in the battery, respectively. It can be seen that as the energy in the battery decreases, the mismatch factor increases. In other words, the low transmission power due to reduced charge in the battery is compensated by using longer channel codewords. Conversely, when the transmission power is large, codewords of smaller length are used so that the constraint in \eqref{eq:20} is satisfied. As expected, an increasing transmission power results in a decreasing density function $f(z)$. In other words, the storage process spends a smaller fraction of time at higher battery charges that have larger transmission power.
\begin{figure}[t!]
\centering
\epsfig{width=9cm, figure=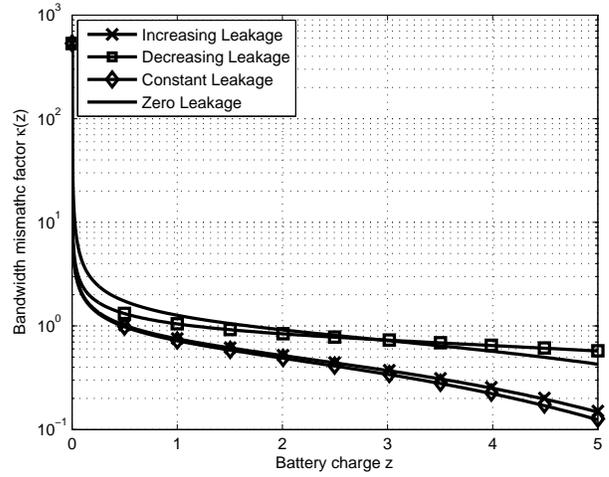}
\caption{\small Bandwidth mismatch factor $\kappa(z)$ under 4 different scenarios, namely, zero leakage, increasing and decreasing leakage and constant leakage, when $p(0^+)=0.001$.}
\label{fig:c}
\end{figure}
\begin{figure}[t!]
\centering
\epsfig{width=9cm, figure=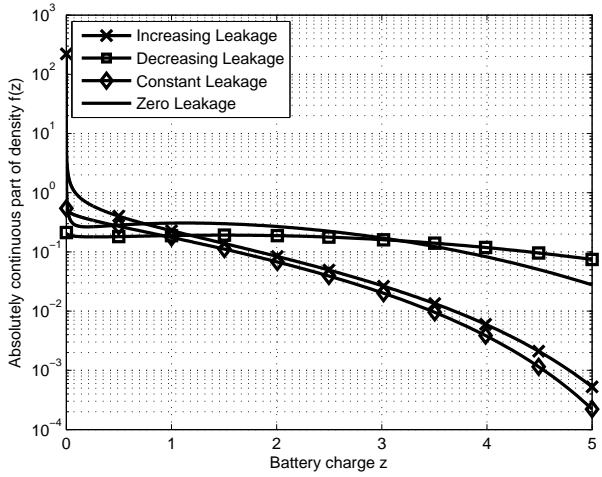}
\caption{\small Density function $f(z)$ under 4 different scenarios, namely, zero leakage, increasing and decreasing leakage and constant leakage, when $p(0^+)=0.001$.}
\label{fig:d}
\end{figure}

The average distortion for the two cases of dynamically varying bandwidth mismatch factor and constant bandwidth mismatch factor is illustrated in Fig \ref{fig:a}, both as functions of the battery capacity. It can be observed that for a battery with capacity in the range $2 \leq L \leq 12$, a communication system with an adaptive mismatch factor, as proposed in this paper, performs better compared to that of $\kappa(z)=1$. Whether this gap is considered small or not depends on the application. If it is negligible for some applications, then using a fixed channel coding scheme would result in a reduced complexity system with negligible impact on performance. On the other hand, regardless of whether we are using a dynamic mismatch factor or a constant mismatch factor, for large values of $L$ the average distortion of both coding schemes approach the lower bound and merge asymptotically. Likewise, for small battery capacities the average distortion of both coding schemes approach $D_{\rm max}$ and merge as well.
\begin{figure}[t]
\centering
\epsfig{width=9cm, figure=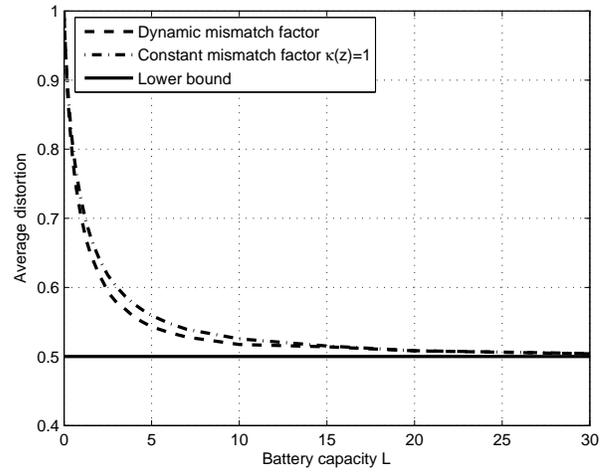}
\caption{\small Average distortion for the two cases of adaptive mismatch factor and constant mismatch factor $\kappa(z)=1$, when the battery capacity varies in the range $0 \leq L \leq 30$.}
\label{fig:a}
\end{figure}
\subsection{A Binary Source over a Gaussian Channel}
We now consider a binary source with the Bernoulli$(1/2)$ distribution for which we have $H(\mathsf{p})=1$, and the bound $-0.5 < \beta < 0$ as explained in subsection \ref{sub:1}. Analogous to the Gaussian source, with different battery capacities $L=1, 2, ..., 5$ and for the two cases of $(i)$ zero leakage rate $\ell(z)=0$ and $(ii)$ increasing leakage rate $\ell(z)=1-e^{-z}$, we evaluate the achieved average distortion $D_{\rm avg}$, the distortion lower bound $D_{\rm LB}$ and good values of the constants found by numerical search. The results of case $(i)$ and $(ii)$ which are summarized in Tables \ref{tab:3} and \ref{tab:4}, respectively, have the same trends as in Tables \ref{tab:1} and \ref{tab:2} with increasing the battery capacity. Moreover, the numerical results further show that for $L=5$ the proposed achievable scheme with no leakage can achieve a distortion which is at most $19\%$ above the lower bound.   
\begin{table*}[t!]
\centering
\caption{\label{tab:3}\small Distortion lower bound $D_{\rm LB}$, average distortion $D_{\rm avg}$, and good values of the constants $C_1$, $C_2$ and $\beta$ of a Bernoulli$(1/2)$ source, for different battery capacities with the initial condition $p(0^+)=0.001$, when $\ell(z)=0$. The ratio of $D_{\rm avg}$ to $D_{\rm LB}$ quantifies the gap between the average distortion and the lower bound.}
  \begin{tabular}[!t]{ccccc}
\hline\hline
Capacity of the Battery &  $D_{\rm LB}$ & $D_{\rm avg}$ & $D_{\rm avg}/D_{\rm LB}$ &   Constants \\
\hline
$L=1$ &  $0.1651$ & $0.2097$ & $1.27$ & $\beta=-0.3450, C_1=-0.36, C_2=0.13$\\
 \hline
$L=2$ &  $0.1270$ &  $0.1663$ & $1.30$ & $\beta=-0.3170, C_1=-0.32, C_2=0.15$\\
  \hline
$L=3$ & $0.1161$& $0.1473$ & $1.26$ & $\beta=-0.3039, C_1=-0.31, C_2=0.16$ \\ 
  \hline
$L=4$ &  $0.1122$ & $0.1367$ & $1.21$ &  $\beta=-0.2962, C_1=-0.30, C_2=0.16$ \\  
  \hline
$L=5$  &  $0.1108$  & $0.1321$ & $1.19$ & $\beta=-0.2901, C_1=-0.29, C_2=0.16$ \\
\hline
\end{tabular}
\end{table*}
\begin{table*}[t!]
\centering
\caption{\label{tab:4}\small Distortion lower bound $D_{\rm LB}$, average distortion $D_{\rm avg}$, and good values of the constants $C_1$, $C_2$ and $\beta$ of a Bernoulli$(1/2)$ source, for different battery capacities with the initial condition $p(0^+)=0.001$, when $\ell(z)=1-e^{-z}$. The ratio of $D_{\rm avg}$ to $D_{\rm LB}$ quantifies the gap between the average distortion and the lower bound.}
  \begin{tabular}[!t]{ccccc}
\hline\hline
Capacity of the Battery &  $D_{\rm LB}$ & $D_{\rm avg}$ & $D_{\rm avg}/D_{\rm LB}$ &   Constants \\
\hline
$L=1$ &  $0.1651$ & $0.2356$ & $1.42$ & $\beta=-0.3600, C_1=-0.36, C_2=0.04$\\
 \hline
$L=2$ &  $0.1270$ &  $0.2044$ & $1.60$ & $\beta=-0.3417, C_1=-0.35, C_2=0.11$\\
  \hline
$L=3$ & $0.1161$& $0.1935$ & $1.66$ & $\beta=-0.3347, C_1=-0.34, C_2=0.15$ \\ 
  \hline
$L=4$ &  $0.1122$ & $0.1905$ & $1.69$ &  $\beta=-0.3328, C_1=-0.34, C_2=0.17$ \\  
  \hline
$L=5$  &  $0.1108$  & $0.1885$ & $1.70$ & $\beta=-0.3301, C_1=-0.33, C_2=0.18$ \\
\hline
\end{tabular}
\end{table*}

\section{Conclusion}\label{sec:9}
We have investigated the problem of joint source-channel coding in a point-to-point channel with an energy harvesting transmitter. We used a calculus of variations technique to characterize an achievable
joint source-channel coding scheme as well as an achievable transmission power policy
to minimize the distortion at the receiver. We also obtained a distortion lower bound, where we used the convexity of the distortion function and an upper bound on the average transmission power. 

For a moderate-size battery capacity, we numerically showed that the achievable distortion
with a dynamically varying bandwidth mismatch factor is smaller than that of a constant mismatch factor when the battery has no leakage. Moreover, we observed that as the battery capacity tends to infinity the achievable distortion for both coding schemes approached the lower bound. Furthermore, with a constant mismatch factor $\kappa(z)=1$, we found a constant transmission power policy that can arbitrarily approach this lower bound for an infinite battery capacity. 

As examples of continuous and discrete alphabet sources, we considered both Gaussian and binary sources to validate our analytical findings. In both cases, we showed numerically that
a good transmission power policy increases as the battery charge increases. In contrast, a good mismatch factor policy, which measures the ratio of the length of channel codewords per source symbol, is a decreasing function of the battery charge. We further examined these policies under different possibilities for battery leakage rate, i.e., zero leakage rate as well as non-zero arbitrary leakage rate.

One possible future extension is to consider a scenario where lossy source-channel communication is carried out over a time-varying channel with slow fading. In particular, the transmission power and mismatch factor in this case need to be adapted to both the battery charge and the channel state. However, similar to \cite{khuzani2013optimal}, we expect a system of coupled ODEs with one ODE per channel state, which can be challenging to solve, even numerically. Other future work includes considering other and more general energy arrival models.

\appendices
\section{}
To prove the convexity of the distortion function $D^{\dagger}(p,q)$ over the domain $p \geq 0, q>0$ where $q:=1/\kappa$, we consider two cases: ($\rm a$) $R_s^{\rm th}=\infty$, ($\rm b$) $R_s^{\rm th}<\infty$. As already discussed, an example of the former is a Gaussian source and an example of the latter is a binary source. We first prove the joint convexity with respect to $p$ and $q$ for case ($\rm a$). To do so, we compute the Hessian matrix of $D^{\dagger}(p,q)$ for $p>0, q>0$, denoted by $H_D$, and show that it is positive definite. The Hessian is given by
$$H_D = \begin{pmatrix}
\dfrac{\partial^2 D^{\dagger}}{\partial p^2} & \dfrac{\partial^2 D^{\dagger}}{\partial p \; \partial  q}  \\
\dfrac{\partial^2 D^{\dagger}}{\partial q \; \partial p} & \dfrac{\partial^2 D^{\dagger}}{\partial  q^2}
\end{pmatrix},$$
where by a simple calculation we obtain
\begin{align}
\nonumber \dfrac{\partial^2 D^{\dagger}}{\partial p^2}
\nonumber &= \hspace{2mm} \frac{R_c^{\prime\prime}(p)}{R_s^{\prime} \left( D(p,\kappa) \right)} + \left( R_c^{\prime}(p) \right)^2 \kappa \times \frac{- R_s^{\prime\prime} \left( D(p,\kappa)\right)}{\left( R_s^{\prime}(D(p,\kappa)) \right)^3 } \\ 
\nonumber &\stackrel{(\rm a)}{>} \hspace{2mm} 0,
\end{align}
where $(\rm a)$ follows from the conditions [S2] and [C2] on $R_s(D)$ and $R_c(p)$, respectively. Moreover, the strict inequality is justified by the fact that since $R_s^{\rm th}=\infty$, the distortion $D$ is strictly positive and therefore $R_s^{\prime}(D)$ is finite due to [S3]. Similarly, we can show that the determinant of $H_D$ is strictly positive, i.e.,
\begin{align}
\nonumber  \det(H_D)
\nonumber =& \hspace{0mm}\left( R_c(p) \right)^2 \kappa^3 R_c^{\prime\prime} (p) \\
\nonumber &\times \frac{1}{R_s^{\prime}(D(p,\kappa))} \times \frac{-R_s^{\prime\prime}(D(p,\kappa))}{R_s^{\prime}(D(p,\kappa))} \stackrel{(\rm b)}{>} \hspace{2mm} 0,
\end{align} 
where $(\rm b)$ again follows from conditions [S2], [S3] and [C2]. Since, $H_{D_{11}} > 0$ and $\det(H_D)> 0$, by Sylvester's criterion the matrix $H_D$ is positive definite, and it thus follows that $D^{\dagger}(p,q)$ is jointly convex over the pair $p$ and $q$. 

\begin{figure*}[t!] 
\centering  
\captionsetup{justification=centering}
\def\svgwidth{100pt} 
\input{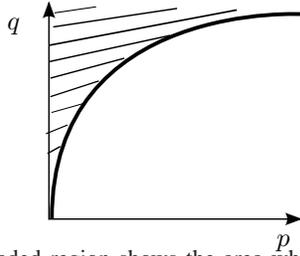} 
\caption{\small The shaded region shows the area where $q > R_c(p)/R_s^{\rm th}$.\vspace{1cm}}
\label{fig:2} 
\end{figure*} 
To prove the joint convexity with respect to $p$ and $q$ for case ($\rm b$) where $R_s^{\rm th}<\infty$, we recall that $D^{\dagger}(p,q)=0$ for $ R_c(p) \geq q R_s^{\rm th}$. Although the function $D^{\dagger}(p,q)$ is continuous everywhere, and in particular at the points where $\kappa R_c(p)=R_s^{\rm th}$, the second derivative at these points may not necessarily exist and a more complicated analysis is required. Therefore, as illustrated in Fig. \ref{fig:2} we separate the region $p>0$ and $q >0$ into two parts: the open shaded region, $ R_c(p) < q R_s^{\rm th}$, over which the convexity argument reduces to the case ($\rm a$) and the closed unshaded region, $ R_c(p) \geq q R_s^{\rm th}$, over which $D^{\dagger}(p,q)=0$. It is not hard to see that the unshaded region given by $R_c(p) \geq q R_s^{\rm th}$ (or equivalently $q \leq R_c(p)/R_s^{\rm th}$) is convex. This is due to the fact that $R_c(p)$ is a concave function and thus the region it traces is convex. Now, consider two arbitrary points $\alpha_1=(p_1,q_1)$ and $\alpha_2=(p_2,q_2)$ such that $\alpha_1, \alpha_2 \in \lbrace (p,q): p>0, q>0 \rbrace$. For $\lambda \in [0, 1]$, we define the function $g(\lambda)$ as follows
\begin{align}
\nonumber g(\lambda) :=& \hspace{2mm} D^{\dagger}\left( \lambda \alpha_1 + (1-\lambda) \alpha_2 \right) \\
\nonumber =& \hspace{2mm}D^{\dagger} (\lambda p_1 + (1-\lambda)p_2, \lambda q_1 + (1-\lambda) q_2).
\end{align}
\begin{figure*}[t!]
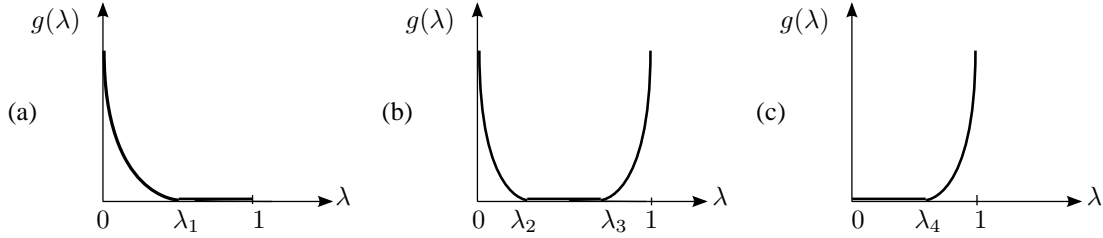

\centering
\begin{subfigure}
\centering 
\def\svgwidth{90pt} 
\input{Figure7.eps_tex}\hspace{1.2cm}
\end{subfigure}
~
\begin{subfigure}
\centering 
\def\svgwidth{90pt} 
\input{Figure8.eps_tex}\hspace{1.2cm}
\end{subfigure}
~
\begin{subfigure}
\centering 
\def\svgwidth{90pt} 
\input{Figure9.eps_tex}\hspace{1.4cm}
\end{subfigure}
\caption{\small Special cases of the function $g(\lambda)$.}
\label{fig:3}
\end{figure*}
If $g(\lambda)$ is convex for all choices of $\alpha_1$ and $\alpha_2$, so is $D^{\dagger}(p,q)$.
With respect to the closed line segment connecting these two points, i.e., $\mathcal{L} = \lbrace \lambda \alpha_1 + (1-\lambda) \alpha_2: \lambda \in [0,1] \rbrace$, the following different cases may happen:
\begin{itemize} 
\item If $\alpha_1$ and $\alpha_2$ are both in the unshaded region, the line segment $\mathcal{L}$ only passes through the unshaded region as the region is convex and thus $g(\lambda)=0$, which is convex.
\item If $\alpha_1$ and $\alpha_2$ are in two different regions (say $\alpha_1$ is in the unshaded region and $\alpha_2$ is in the shaded region), there exists $\lambda_1 \in (0,1]$ as shown in Fig. \ref{fig:3}a such that $\mathcal{L}$ lies in the shaded region for $\lambda < \lambda_1$ and it enters the unshaded region for $\lambda \geq \lambda_1$. Furthermore, once the line segment enters the convex unshaded region, it does not exit. Here, $g(\lambda)$ is continuous for $\lambda \in [0,1]$, non-negative and strictly convex for $\lambda \in [0, \lambda_1)$, while $g(\lambda)=0$ for $\lambda \in [\lambda_1, 1]$. Thus, $g(\lambda)$ is convex.
\item If $\alpha_1$ and $\alpha_2$ are both in the shaded region, then either $\mathcal{L}$ only passes through the shaded region where the convexity of $g(\lambda)$ reduces to the case $(\rm a)$, or due to the convexity of the unshaded region it enters the unshaded region for one and only one contiguous closed interval $[\lambda_2,\lambda_3] \subset (0,1)$ and again returns to the shaded region for $\lambda > \lambda_3$. The function $g(\lambda)$ in the latter case is again continuous for $\lambda \in [0,1]$, non-negative and strictly convex for $\lambda \in [0, \lambda_2) \cup (\lambda_3, 1]$, while $g(\lambda)=0$ for $\lambda \in [\lambda_2, \lambda_3]$. Thus, $g(\lambda)$ is convex and this case is illustrated in Fig. \ref{fig:3}b. 
\item Similar to the second case, when $\alpha_1$ is in the shaded region and $\alpha_2$ is in the unshaded region, we have Fig. \ref{fig:3}c, where $g(\lambda)$ is convex.
\end{itemize}
Consequently, even for sources for which zero distortion can be achieved (e.g., the binary source), the function $D^{\dagger}(p,q)$ is jointly convex over the pair $p$ and $q$.

\section{}\label{app:3}
We first rewrite \eqref{eq:4} for $z>0$ as
\begin{equation}\label{eq:60:1}
R_s(D(p(z),\kappa(z)))=\kappa(z) R_c(p(z)),
\end{equation}
where by taking the first derivative of both sides with respect to $\kappa(z)$ at a fixed $z>0$ we obtain
\begin{equation}\label{eq:60:2}
\frac{\partial D(p(z),\kappa(z))}{\partial \kappa(z)} R_s^{\prime}\left( D\left( p(z),\kappa(z) \right) \right) = R_c(p(z)),
\end{equation}
or equivalently
\begin{equation}\label{eq:60:3}
\frac{\partial D(p(z),\kappa(z))}{\partial \kappa(z)}= \frac{R_c(p(z))}{R_s^{\prime}\left( D\left( p(z),\kappa(z) \right) \right)}.
\end{equation}
To simplify the notation, we fix $z>0$, and simply write $D$ in place of $D(p(z),\kappa(z))$. Thus, with the substitution \eqref{eq:60:3}, \eqref{eq:60} reduces to
\begin{equation}\label{eq:60:4}
D - \frac{R_s(D)}{R_s^{\prime}\left( D \right)} + \beta = 0,
\end{equation}
where for the second term we also used the substitution \eqref{eq:60:1}. Therefore, $D$ must be a root of \eqref{eq:60:4}. Solving \eqref{eq:60:4} for $\beta$, we then obtain
\begin{equation}\label{eq:60:5}
\beta= \frac{R_s(D)}{R_s^{\prime}\left( D \right)} - D.
\end{equation} 
We next show that the r.h.s of \eqref{eq:60:5} is strictly decreasing with respect to $D$. To do so, we take the first derivative of the r.h.s in \eqref{eq:60:5} with respect to $D$, and show that it is always negative in the open interval $0 < D < D_{\rm max}$, i.e.,
\begin{align}\label{eq:60:6}
\frac{d }{d D} \left[ \frac{R_s(D)}{R_s^{\prime}\left( D \right)} - D \right] &= \frac{-R_s^{\prime\prime}(D)R_s(D)}{\left(R_s^{\prime}(D)\right)^2} \stackrel{(\rm c)}{<} 0,
\end{align}
where $(\rm c)$ follows from the assumption of $R_s^{\prime\prime}(D)>0$ and further recalling that $R_s(D)>0$ and $R_s^{\prime}(D)$ is finite, for $0 < D < D_{\rm max}$. Therefore, for every fixed $\beta$, there is at most one real root $D$ that solves \eqref{eq:60:5}, and it does not depend on $z$. This is also true for the optimized value of $\beta^{\star}$, and therefore the associated instantaneous distortion $D^{\star}(p(z),\kappa(z))$ is constant for $z>0$. Furthermore, for there to be at least one real root, $\beta$ in \eqref{eq:60:5} must be in the range $\beta_{\rm min} < \beta < \beta_{\rm max}$, where
\begin{align}\label{eq:60:8}
\beta_{\rm max} :=& \hspace{2mm} \sup_{0 < D < D_{\rm max}}  \left[ \frac{R_s(D)}{R_s^{\prime}(D)} - D \right] \\ 
=& \hspace{2mm} \lim_{D \downarrow 0} \hspace{3mm} \frac{R_s(D)}{R_s^{\prime}(D)},
\end{align} 
and
\begin{align}\label{eq:60:9}
\beta_{\rm min} :=& \hspace{2mm} \inf_{0 < D < D_{\rm max}} \left[ \frac{R_s (D)}{R_s^{\prime}(D)} - D \right] \\ 
=& \hspace{2mm} \lim_{D \uparrow D_{\rm max}} \hspace{3mm} \frac{R_s(D)}{R_s^{\prime}(D)} - D_{\rm max}.
\end{align}

\end{document}